\documentclass[preprint,12pt]{elsarticle}

\usepackage{fixmath}
\usepackage{bm}
\usepackage{amsbsy}
\usepackage{color}
\usepackage{verbatim}
\usepackage{multirow}
\usepackage{amssymb}
\usepackage{amsthm}
\usepackage{array}
\usepackage{mathtools}
\usepackage{amsmath}
\usepackage{graphicx}
\usepackage{tikz}
\usetikzlibrary{positioning}
\usepackage{xcolor}
\usepackage{thm-restate}

\usepackage{graphicx}
\usepackage{subcaption}

\newtheorem{theorem}{Theorem}
\newtheorem{lemma}[theorem]{Lemma}

\newtheorem{definition}[theorem]{Definition}

\newcommand{\size}[1]{\ensuremath{|#1|}}
\newcommand{\ceil}[1]{\ensuremath{\lceil#1\rceil}}

\newcommand{\lrA}[1]{\ensuremath{\left(#1\right)}}

\let\epsilon=\varepsilon

\def\OPT{\mbox{OPT}}

\def\C{\mathcal{C}}

\def\P{\mathcal{P}}
\def\T{\mathcal{T}}

\def\I{\mathcal{I}}

\def\OPT{\mbox{OPT}}

\def\C{\mathcal{C}}

\def\P{\mathcal{P}}

\usepackage{algorithm}
\usepackage{algorithmicx}
\usepackage{algpseudocode}
\usepackage{url}
\usepackage{hyperref}


\usepackage{lineno}

\journal{}

\begin{document}

\begin{frontmatter}



\title{Enhanced Approximation Algorithms for the Capacitated Location Routing Problem} 


\author{Jingyang Zhao}
\author{Mingyu Xiao\corref{cor1}}
\ead{myxiao@gmail.com}
\cortext[cor1]{Corresponding author.}
\author{Shunwang Wang}
\affiliation
{
    organization={School of Computer Science and Engineering, University of Electronic Science and Technology of China},
    addressline={2006 Xiyuan Ave}, 
    city={Chengdu},
    postcode={610054}, 
    state={Sichuan},
    country={China}
}

\begin{abstract}
The Capacitated Location Routing Problem is an important planning and routing problem in logistics, which generalizes the capacitated vehicle routing problem and the uncapacitated facility location problem. In this problem, we are given a set of depots and a set of customers where each depot has an opening cost and each customer has a demand. The goal is to open some depots and route capacitated vehicles from the opened depots to satisfy all customers' demand, while minimizing the total cost. In this paper, we propose a $4.169$-approximation algorithm for this problem, improving the best-known $4.38$-approximation ratio. Moreover, if the demand of each customer is allowed to be delivered by multiple tours, we propose a more refined $4.091$-approximation algorithm. Experimental study on benchmark instances shows that the quality of our computed solutions is better than that of the previous algorithm and is also much closer to optimality than the provable approximation factor.
\end{abstract}



\begin{keyword}
Capacitated Location Routing, Vehicle Routing, Combinatorial Optimization, Approximation Algorithms
\end{keyword}
\end{frontmatter}


\section{Introduction}
In the fields of logistics, vehicle routing and facility location are two major problems that have been widely studied in both theory and application.
Given a set of depots, vehicle routing aims to route capacitated vehicles in the depots to deliver goods for customers to satisfy their demand using the minimum cost. Facility location concerns the opening cost of depots and the connection cost between customers and the opened depots. Location routing can be seen as a combination of these two problems, and is more natural in real life. It involves first opening a set of depots at some cost, and then routing the vehicles from the opened depots.
Location routing problems have been studied for decades since the idea of combining vehicle routing and facility location was introduced in~\cite{von1961relationship,maranzana1964location,webb1968cost}.
Recent surveys of location routing problems can be found in~\cite{DBLP:journals/eor/ProdhonP14,DBLP:journals/eor/DrexlS15}. 

In Capacitated Location Routing (CLR), we are given an undirected complete graph $G=(V\cup U, E, w, \phi,d,k)$, where $V$ is the set of customers, $U$ is the set of depots (or facilities), and 
there is one travel cost function $w:E\rightarrow\mathbb{R}_{\geq 0}$ on edges, an opening cost function $\phi:U\rightarrow\mathbb{R}_{\geq 0}$ on depots, and a demand function $d:V\rightarrow\mathbb{R}_{\geq 0}$ on customers. 
Each depot $u\in U$ has an opening cost $\phi(u)$ and contains an unbounded fleet of vehicles with the same capacity $k\in \mathbb{R}_{>0}$ (one can image that each vehicle in the depot can be used for many times), each customer $v\in V$ has a demand $d(v)>0$, and we need to determine a set of depots $O\subseteq U$ to open and a set of tours $\I$ such that (1) each tour starts and ends at the same (opened) depot, (2) each tour delivers at most $k$ of demand to customers on the tour, and (3) the union of tours in $\I$ satisfies all customers' demand. The total cost is defined as $\sum_{I\in\I}w(I)+\sum_{u\in O}\phi(u)$, where $w(I)$, the cost of tour $I$, is defined to be  the total cost of edges in $I$, i.e., $w(I)=\sum_{e\in I}w(e)$. We consider two variants of CLR: \emph{splittable} and \emph{unsplittable}, where the demand of each customer is allowed to be delivered by multiple tours (resp., only one tour) in splittable (resp., unsplittable) CLR.
This definition also captures the case where each tour incurs an extra depot-dependent fixed cost~\cite{tuzun1999two,barreto2007using}, i.e., each vehicle departing from depot $u\in U$ incurs an additional cost of $F_u\in\mathbb{R}_{\geq 0}$. This can be represented by adding $F_u/2$ to the cost of all edges incident to $u$, as each tour originating at $u$ uses only two of these edges.

CLR generalizes many famous NP-hard problems. If $\phi\equiv0$, we can open all depots in $U$ at no cost, and in this case CLR is the Multidepot Capacitated Vehicle Routing Problem (MCVRP). 
If $\phi\equiv0$ and $k=\infty$, CLR becomes the metric $m$-depot traveling salesman problem (TSP).
Hence, CLR also includes the (single depot) Capacitated Vehicle Routing Problem (CVRP) and metric TSP as special cases.
Moreover, CLR with $k$ being the greatest common divisor of the demands also generalizes Uncapacitated Facility Location (UFL), where we choose to open some depots and assign each customer $v$ to its nearest opened depot $u$ with a cost of $d(v) w(v,u)$.
These problems have been extensively studied both in terms of approximation algorithms and experimental algorithms~\cite{ravi2006approximation,toth2014vehicle,montoya2015literature,an2017lp,DBLP:conf/aaai/ZhangHLQG15,xin2021multi,DBLP:conf/icml/0002WSCZ23}.

\subsection{Related Work}
We focus on approximation algorithms.
Next, we give a brief review of literature on TSP, Vehicle Routing, UFL, and CLR.

\textbf{TSP.} For metric TSP, the Christofides-Serdyukov algorithm~\cite{christofides1976worst,serdyukov1978some} is a well-known $1.5$-approximation algorithm, and the ratio has been recently improved to $1.5-10^{-36}$ by~\cite{KarlinKG21,DBLP:conf/ipco/KarlinKG23}.
For metric $m$-depot TSP, Rathinam \emph{et al.}~\cite{rathinam2007resource} proposed a simple 2-approximation algorithm, and Xu \emph{et al.}~\cite{xu2011analysis} improved the ratio to $2-1/m$. There are also some works for the case that $m$ is fixed~\cite{xu20153,DBLP:journals/siamcomp/TraubVZ22,deppert20233}.

\textbf{Vehicle Routing.} 
For the case of single depot, Haimovich and Kan~\cite{HaimovichK85} proposed a 2.5-approximation algorithm for splittable CVRP, and Altinkemer and Gavish~\cite{altinkemer1987heuristics} proposed a 3.5-approximation algorithm for unsplittable CVRP.
For the case of multidepot, Li and Simchi-Levi~\cite{tight} proposed a 4-approximation algorithm for splittable MCVRP, and Harks \emph{et al.}~\cite{HarksKM13} proposed a 4-approximation algorithm for unsplittable MCVRP.
These results got improved only very recently. 
For CVRP, Blauth \emph{et al.}~\cite{blauth2022improving} improved the ratio to $2.5-\frac{1}{3000}$ for the splittable case, and Friggstad \emph{et al.}~\cite{uncvrp} improved the ratio to about $3.164$ for the unsplittable case.
For MCVRP, Zhao and Xiao~\cite{DBLP:conf/cocoon/ZhaoX23} obtained a ratio of $4-\frac{1}{1500}$ for the splittable case and a ratio of $4-\frac{1}{50000}$ for the unsplittable case. Recently, Lai \emph{et al.}~\cite{DBLP:journals/informs/LaiX0D23} proposed a $(6-4/m)$-approximation algorithm for MCVRP with fixed $m$, considering constraints where the number of vehicles at each depot is limited and each vehicle can be used for only one tour.

\textbf{UFL.} 
UFL has a rather rich history on approximation algorithms (see the book~\cite{williamson2011design}), where many new techniques were developed. We mention the following results: Jain \emph{et al.}~\cite{DBLP:journals/jacm/JainMMSV03} proposed a practical 1.861-approximation algorithm based on the greedy method with a running time of $O(nm\log nm)$; Byrka and Aardal~\cite{DBLP:journals/siamcomp/ByrkaA10} proposed a 1.5-approximation algorithm and a bifactor approximation algorithm by modifying the LP rounding method in \cite{DBLP:journals/siamcomp/ChudakS03}; The current best result is a 1.488-approximation algorithm~\cite{DBLP:journals/iandc/Li13}.

\textbf{CLR.}
Since CLR is more challenging, there are only a few results on approximation algorithms.
Harks \emph{et al.}~\cite{HarksKM13} proposed a 4.38-approximation algorithm for both unsplittable and splittable CLR, and showed that unsplittable CLR cannot be approximated better than 
1.5 unless $P=NP$. They also extended their algorithm to derive approximation algorithms for three settings of CLR: prize-collecting, grouping, and cross-docking. Recently, Heine \emph{et al.}~\cite{ejor/HeineDM23} proposed a bifactor approximation algorithm for a variant of CLR, where each depot is also capacitated.

\subsection{Our Results}
In this paper, we propose two improved approximation algorithms for CLR. The first, denoted as Tree-Alg, is a 4.169-approximation algorithm for both unsplittable and splittable CLR, which improves the previous 4.38-approximation algorithm~\cite{HarksKM13}. Note that the previous approximation ratio has been kept for a decade. The second, denoted as Path-Alg, achieves a better ratio of 4.091 for splittable CLR. 
The outline of our improvements are as follows.

First, the main idea of the previous algorithm~\cite{HarksKM13} is as follows: given an instance $G$ of CLR, it constructs an instance $\hat{G}$ for UFL, opens some depots by calling the bifactor approximation algorithm for UFL~\cite{DBLP:journals/siamcomp/ByrkaA10} on $\hat{G}$, finds a set of desired trees, and finally satisfies the customers by splitting the trees.
When constructing the instance $\hat{G}$, the new opening cost of each depot $u\in U$ was set equal to $\phi(u)$. However, we find that this choice is not necessarily optimal. To address this, we set the new opening cost of each $u\in U$ to $\alpha\cdot\phi(u)$, where $\alpha$ is a new parameter that can be optimized in the final analysis. 
Based on the above observation and a better upper bound of the set of trees, we derive two stronger lower bounds for CLR, which are crucial to our results.
Moreover, by carefully tuning the parameter in the bifactor approximation algorithm for UFL, we obtain our Tree-Alg, which can be viewed as a refined version of the previous algorithm and improves the previous approximation ratio from $4.38$ to $4.169$.

Second, as asked in~\cite{HarksKM13}, an open problem is whether a more tour-specific approach could lead to better approximation ratios. We answer this question partially by showing that for splittable CLR our Path-Alg, which focuses on splitting paths, achieves a better approximation ratio of $4.091$. 
Our approach is inspired by the cycle-splitting method used for MCVRP in~\cite{tight}.
Note that the idea of their cycle-splitting method is as follows: given an instance $G$ of CLR, it constructs a new graph $H$ by contracting all depots in $U$ as a super-depot, obtains a Hamiltonian cycle by calling a $\delta$-approximation algorithm for metric TSP on $H$, and then obtain a set of tours in $H$ by using the well-known cycle partition algorithm for CVRP~\cite{HaimovichK85,altinkemer1987heuristics}, which corresponds to a set of cycles in $G$. It then modifies the set of cycles into a set of feasible tours by introducing some additional costs (e.g., the cost of the Hamiltonian cycle used).
However, for CLR, we cannot directly use this idea. One reason is that, to measure the opening cost of depots, the edges incident to depots in the graph may have extra costs, and hence we need to carefully control the number of edges incident to depots. To do this, after obtaining a Hamiltonian cycle, we directly consider its corresponding edges in $G$ which in fact forms a set of cycles and paths, and then we obtain a set of paths in $G$ so that each path contains only one depot as one of its terminals by carefully shortcutting and deleting the edges incident to the depots. Finally, we satisfy customers by splitting the paths. 
Note that paths are simpler than trees in structure, and hence Path-Alg ensures that almost all tours deliver exactly $k$ of demand, whereas for Tree-Alg almost all tours deliver only $k/2$ of demand in the worst case. 
Although Path-Alg seems to perform better in the sense that it significantly reduces the connection cost by the above property, the analysis of Path-Alg uses more techniques. 
The main reason is that Path-Alg uses an approximation algorithm of TSP to compute a set of paths in $G$ which are more expensive, and then we cannot even obtain a better approximation ratio by a straightforward analysis. 

Last, we do experiments to evaluate our algorithms. In practice, our algorithms are easy to implement and run very fast.
Experimental study on benchmark instances shows that the quality of our computed solutions is better than that of the previous algorithm and is also much closer to optimality than the provable approximation factor.

\subsection{Paper Organization}
The remaining parts of the paper are organized as follows. In Section~\ref{sec2}, we introduce some notation and the formal definitions of CLR and UFL. 
In Section~\ref{sec3}, we propose two stronger lower bounds for CLR. 
In Section~\ref{sec4}, we give our Tree-Alg for both unsplittable and splittable CFL, and in Section~\ref{sec5}, we give our Path-Alg for splittable CFL. 
In Section~\ref{sec6}, we present the experimental study of our algorithms. 
At last, we make the concluding remarks in Section~\ref{sec7}.

\section{Preliminaries}\label{sec2}
In CLR, we use $G=(V\cup U, E, w, \phi, d, k)$ to denote the input complete graph. 
The cost function $w$ is a metric function, i.e., it is symmetric and satisfies the triangle inequality.
In UFL, the input graph is the same as CLR, except for the absence of the parameter $k$, and we use $\hat{G}=(V\cup U,$ $E, w, \phi, d)$ to denote it.

For any function $f: X\to \mathbb{R}_{\geq0}$, we always define $f(Y) = \sum_{x\in Y} f(x)$ for any $Y\subseteq X$.
For any subgraph $S$ of $G$, we use $V(S)$, $U(S)$, and $E(S)$ to denote the customer set, the depot set, and the edge set of $S$, respectively. Furthermore, we define $w(S)=\sum_{e\in E(S)}w(e)$, $\phi(S)=\sum_{u\in U(S)}\phi(u)$, and $d(S)=\sum_{v\in V(S)}d(v)$.

A \emph{walk} in a graph, denoted by $v_1v_2\dots v_l$, is a sequence of vertices $v_1,v_2,\dots,v_l$, where a vertex may appear more than once and each consecutive pair of vertices is connected by an edge.
A \emph{path} in a graph is a walk where no vertex appears more than once. The first and the last vertices of a path are referred to as its \emph{terminals}.
A \emph{closed} walk is a walk where the first and the last vertices are the same, and
a \emph{cycle} is a walk where only the first and the last vertices are the same.
Given a closed walk, we can skip repeated vertices along the walk to get a cycle, and such an operation is called \emph{shortcutting}. 
During shortcutting, if a specific vertex $v$ is always skipped, we refer to it as \emph{shortcutting $v$}.



A \emph{constrained spanning forest} in $G$ is a forest that spans (i.e., covers) all vertices in $V$ and each tree in it contains only one depot.
A \emph{constrained spanning path-packing} in $G$ is a set of vertex-disjoint paths that spans all vertices in $V$, each depot is contained in at most one path in it, and each path in it contains only one depot and the depot is one of its terminals.
A \emph{constrained spanning cycle-packing} in $G$ is a set of vertex-disjoint cycles or paths that spans all vertices in $V$, each cycle in it contains only one depot, and each path in it contains only two depots and the depots are its terminals. An illustration of constrained spanning forest, path-packing and cycle-packing can be found in Figure~\ref{fig:whole_figure}. 

\begin{figure}[ht]
    \centering
    \begin{subfigure}[b]{0.48\textwidth}
        \centering
        \includegraphics[width=0.8\textwidth]{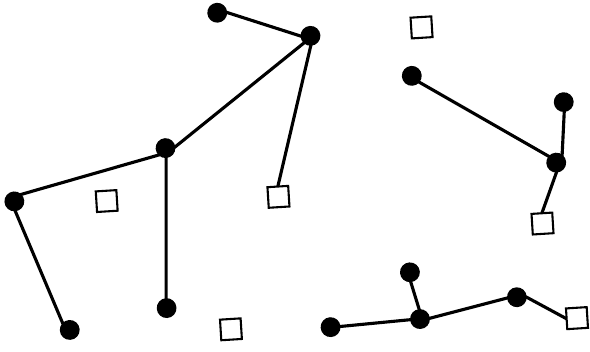}
        \caption{A constrained spanning forest.}
        \label{fig:fig1}
    \end{subfigure}
    
    \vspace{2mm}
    
    \begin{subfigure}[b]{0.48\textwidth}
        \centering
        \includegraphics[width=0.8\textwidth]{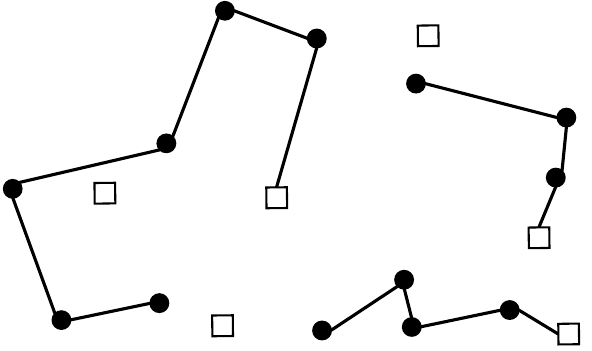}
        \caption{A constrained spanning path-packing.}
        \label{fig:fig2}
    \end{subfigure}
    \hfill
    \begin{subfigure}[b]{0.48\textwidth}
        \centering
        \includegraphics[width=0.8\textwidth]{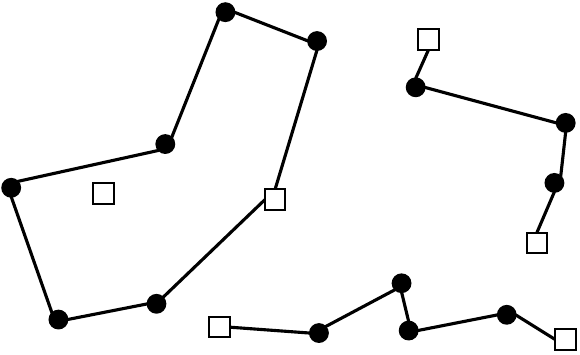}
        \caption{A constrained spanning cycle-packing.}
        \label{fig:fig3}
    \end{subfigure}
    \caption{An illustration of a constrained spanning forest, constrained spanning path-packing, and constrained spanning cycle-packing, where each cycle node represents a customer and each square node represents a depot.}
    \label{fig:whole_figure}
\end{figure}

Given a constrained spanning path-packing $\P$, for any path $P=u...b...c\in\P$, it contains a depot, say $u\in U$, as one of its terminals, and we refer to it as \emph{a path rooted at $u$}, and then $b...c$ is called \emph{the sub-path rooted at $b$}.
A \emph{tour} is a walk that starts and ends at the same depot and does not pass through any other depot. 
By the triangle inequality, we may restrict our attention to simple and minimal tours with each containing only one depot, i.e., each tour is a cycle containing only one depot.

\subsection{Formal Problem Definition}
\begin{definition}[\textbf{CLR}]
Given an undirected complete graph $G=(V\cup U, E, w, \phi, d,k)$, we need to find a set of depots $O\subseteq U$ and a set of tours $\I$ with a demand assignment $x: V\times\I\to \mathbb{R}_{\geq0}$ such that
\begin{enumerate}
\item[(1)] $U(I)\cap O\neq \emptyset$ for any $I\in\I$,
\item[(2)] $\sum_{v\in V(I)}x_{vI}\leq k$ for any $I\in \I$,
\item[(3)] $\sum_{v\in V\setminus V(I)}x_{vI}=0$ for any $I\in \I$,
\item[(4)] $\sum_{I\in\I}x_{vI}=d(v)$ for any $v\in V$,
\end{enumerate}
and $\sum_{I\in\I}w(I)+\sum_{u\in O}\phi(u)$ is minimized.
\end{definition}
Note that (1) ensures that each tour contains one depot, (2) and (3) ensures that each tour delivers at most $k$ of demand to those customers on the tour, and (4) ensures that each customer is fully 
 satisfied by the tours in $\I$.

In the above definition, if each customer is allowed to be satisfied by using multiple tours, we call it as \emph{splittable} CLR.
For \emph{unsplittable} CLR, it requires that each customer must be satisfied by using only one tour. Clearly, unsplittable CLR admits a feasible solution only if it holds $d(v)\leq k$ for any $v\in V$. In any solution $(O,\I)$, we will refer to $\sum_{I\in\I}w(I)$ as the \textit{routing cost} and $\sum_{u\in O}\phi(u)$ as the \textit{opening cost}.


\begin{definition}[\textbf{UFL}]\label{UFL}
Given an undirected complete graph $\hat{G}=(V\cup U, E, w, \phi, d)$, we need to find a set of depots $O\subseteq U$ such that the cost $\sum_{v\in V}d(v)\cdot\min_{o\in O}w(o,v)+\sum_{u\in O}\phi(u)$ is minimized, i.e., for each $v\in V$ we directly assign $d(v)$ of demand to $v$ from its nearest depot $o$ in $O$ with a connection cost of $d(v)w(o,v)$.
\end{definition}

In Definition~\ref{UFL}, $\sum_{v\in V}d(v)\cdot\min_{o\in O}w(o,v)$ is called the \textit{connection cost} and $\sum_{u\in O}\phi(u)$ is called the \textit{opening cost}.

\section{Lower Bounds}\label{sec3}
In this section, we prove two lower bounds on the optimal solution 
that holds for both unsplittable and splittable CLR. These bounds are crucial for us to prove the approximation ratio. 

Given an instance $G=(V\cup U, E, w, \phi, d, k)$ of CLR, we construct an UFL instance $\hat{G}=(V\cup U, E, \widetilde{w}, \widetilde{\phi}, d)$ as follows. The sets of depots and customers with their demand remain the same as in CLR, but we set the costs of edges to $\widetilde{w}\coloneqq(2/k)w$ and the costs of depots to $\widetilde{\phi}\coloneqq \alpha\cdot\phi$, where $\alpha$ is a fixed constant that will be defined later. 
Here, $\alpha$ is an important parameter newly proposed by us. 
Previously, Harks \emph{et al.}~\cite{HarksKM13} focused solely on $\widetilde{\phi}\coloneqq \phi$, i.e., $\alpha=1$. However, in our setting, $\alpha$ may be greater than 1.

Let $\OPT$ (resp., $\OPT'$) denote the cost of an optimal solution for CLR (resp., UFL). We let $\psi^*$ and $\phi^*$ denote the routing cost of vehicles and the opening cost of facilities with respect to the optimal solution of CLR, respectively. Note that $\OPT=\psi^*+\phi^*$. We have the following bound.

\begin{lemma}\label{lb1}
It holds that $\OPT'+(1-\alpha)\cdot\phi^*\leq \OPT$.
\end{lemma}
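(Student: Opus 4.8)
The plan is to exhibit an explicit feasible solution to the UFL instance $\hat{G}$ whose cost is bounded by $\OPT+(\alpha-1)\phi^*$, which would immediately give $\OPT'\le\OPT+(\alpha-1)\phi^*$ and hence the claimed inequality. The natural candidate is to take the very same set of opened depots $O^*$ used by the optimal CLR solution. The opening cost of this depot set in $\hat{G}$ is $\widetilde{\phi}(O^*)=\alpha\cdot\phi(O^*)=\alpha\cdot\phi^*$, so all that remains is to control the connection cost $\sum_{v\in V}d(v)\cdot\min_{o\in O^*}\widetilde{w}(o,v)$ of this UFL solution in terms of the CLR routing cost $\psi^*$.

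First I would set up the bound on the connection cost. Fix the optimal CLR solution $(O^*,\I^*)$ with demand assignment $x$. For each customer $v$ and each tour $I\in\I^*$ carrying $x_{vI}>0$ units to $v$, the tour is a cycle through an opened depot $o_I\in O^*$, so $w(I)\ge 2w(o_I,v)$ by the triangle inequality (the cycle must leave and return to $o_I$, and the shortest such detour through $v$ costs at least $2w(o_I,v)$). More carefully, summing over the customers on a single tour $I$ with capacity $k$, one gets $w(I)\ge \tfrac{2}{k}\sum_{v\in V(I)} x_{vI}\, w(o_I,v)$: this is the standard lower bound relating a capacitated tour's length to the radial ``connection'' distances weighted by the demand fractions it carries, using that the total demand on the tour is at most $k$. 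Since $\widetilde{w}=(2/k)w$, this reads $w(I)\ge \sum_{v\in V(I)} x_{vI}\,\widetilde{w}(o_I,v)$.

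Next I would aggregate over all tours. Summing the displayed inequality over $I\in\I^*$ gives
\[
\psi^*=\sum_{I\in\I^*}w(I)\;\ge\;\sum_{I\in\I^*}\sum_{v\in V(I)} x_{vI}\,\widetilde{w}(o_I,v)\;\ge\;\sum_{v\in V}\Big(\sum_{I\ni v} x_{vI}\Big)\min_{o\in O^*}\widetilde{w}(o,v),
\]
where the last step replaces each $\widetilde{w}(o_I,v)$ by the minimum distance from $v$ to any opened depot and regroups by customer. Using $\sum_{I} x_{vI}=d(v)$ from feasibility condition (4), the right-hand side is exactly $\sum_{v\in V}d(v)\min_{o\in O^*}\widetilde{w}(o,v)$, the connection cost of the UFL solution $O^*$. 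Therefore
\[
\OPT'\;\le\;\sum_{v\in V}d(v)\min_{o\in O^*}\widetilde{w}(o,v)+\widetilde{\phi}(O^*)\;\le\;\psi^*+\alpha\,\phi^*=\OPT+(\alpha-1)\phi^*,
\]
and rearranging yields $\OPT'+(1-\alpha)\phi^*\le\OPT$.

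The main obstacle is making the per-tour radial bound $w(I)\ge\sum_{v\in V(I)}x_{vI}\,\widetilde{w}(o_I,v)$ fully rigorous, since it is the one place where the capacity constraint and the metric enter together. The clean way is to note that for any cyclic tour through $o_I$, each customer $v$ on it satisfies $w(I)\ge 2w(o_I,v)$, and then take a convex combination with weights $x_{vI}/k$ (which sum to at most $1$ by constraint (2)); this gives $w(I)\ge \sum_v \tfrac{x_{vI}}{k}\,2w(o_I,v)=\sum_v x_{vI}\widetilde{w}(o_I,v)$. I would double-check that this same argument is used identically in the unsplittable case (where each $x_{vI}\in\{0,d(v)\}$), so that the lemma indeed holds for both variants as stated.
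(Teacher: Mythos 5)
Your proposal is correct and follows essentially the same route as the paper's proof: open the CLR-optimal depot set in the UFL instance (opening cost $\alpha\phi^*$), and bound the connection cost by the tour-induced assignment using $2w(o_I,v)\le w(I)$ together with $\sum_{v\in V(I)}x_{vI}\le k$ to get $\sum_{v}x_{vI}\widetilde{w}(o_I,v)\le w(I)$. The "convex combination" phrasing is just a repackaging of the same two inequalities the paper chains, so there is nothing to add.
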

\begin{proof}
Consider an optimal solution $(O,\I)$ of CLR with demand assignments $x_{vI}$. We have $\psi^*=w(\I)$ and $\phi^*=\phi(O)$.
As mentioned before, we construct an UFL instance using $\widetilde{w}\coloneqq(2/k)w$ and $\widetilde{\phi}\coloneqq \alpha\cdot\phi$. Next, we construct a feasible solution of UFL as follows. 

In the constructed instance of UFL, we open all depots in $O$. 
Clearly, the connection cost $\sum_{v\in V}d(v)\cdot\min_{o\in O}w(o,v)$ is minimized if for each $v\in V$  we directly assign $d(v)$ of demand from its nearest depot in $O$ to $v$. 
To obtain an upper bound of the connection cost, we consider the following assignments: for each tour $I\in\I$ and $v\in V(I)$ we assign $x_{vI}$ of demand from the depot $u_I$ in $I$ to $v$. By definition of CLR, the above assignments can satisfy all customers. Therefore, the cost of the constructed solution for UFL satisfies
\begin{align*}
\sum_{v\in V}d(v)\cdot\min_{o\in O}\widetilde{w}(o,v)+\sum_{u\in O}\widetilde{\phi}(u)&\leq\sum_{I\in\I}\sum_{v\in V(I)}x_{vI}\cdot \widetilde{w}(v,u_I)+\sum_{u\in O}\widetilde{\phi}(u)\\
&=\sum_{I\in\I}\sum_{v\in V(I)}x_{vI}\cdot(2/k)w(v,u_I)+\alpha\cdot\phi(O)\\
&\leq\sum_{I\in\I}\sum_{v\in V(I)}x_{vI}\cdot(1/k)w(I)+\alpha\cdot\phi(O)\\
&\leq\sum_{I\in\I}w(I)+\alpha\cdot\phi(O)\\
&=w(\I)+\alpha\cdot\phi(O)\\
&=\psi^*+\alpha\cdot\phi^*,
\end{align*}
where the second inequality follows from $2w(v,u_I)\leq w(I)$ by the triangle inequality and the third inequality follows from $\sum_{v\in V(I)}x_{vI}\leq k$ by definition of CLR.

Since $\OPT'$ is the cost of an optimal solution of UFL, we have $\OPT'\leq\psi^*+\alpha\cdot\phi^*$, and thus $\OPT'+(1-\alpha)\cdot\phi^*\leq \psi^*+\phi^*=\OPT$.
\end{proof}

Our bound in Lemma~\ref{lb1} is more general since the previous paper~\cite{HarksKM13} only obtained the bound under $\alpha=1$. Next, we consider the second lower bound.

It was shown in Lemma 2 in~\cite{HarksKM13} that one can use polynomial time to find a constrained spanning forest $\T$ in $G$ such that $2w(\T)+\phi(\T)\leq 2\cdot\OPT$. We propose a better result.

\begin{lemma}\label{lb2}
There is a polynomial-time algorithm to find a constrained spanning forest $\T$ such that $2w(\T)+\phi(\T)+\phi^*\leq 2\cdot\OPT$.
\end{lemma}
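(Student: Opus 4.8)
The plan is to combine an efficient minimum-spanning-tree computation with an existence certificate extracted from the optimal CLR solution. For the algorithm I would form an auxiliary graph $\widetilde{G}$ on the vertex set $V\cup\{r\}$, where $r$ is a new root: every customer pair $v,v'$ keeps the edge weight $w(v,v')$, and each customer $v$ is joined to $r$ by an edge of weight $\min_{u\in U}\{w(v,u)+\phi(u)/2\}$, recording the depot $u$ attaining the minimum. A minimum spanning tree $\widetilde{T}$ of $\widetilde{G}$ is computed in polynomial time; deleting $r$ and reattaching, for each tree, its unique $r$-edge to the recorded depot yields a constrained spanning forest $\T$ (if two trees record the same depot they merge into one tree still containing a single depot, which stays feasible). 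Writing $\mathrm{obj}(\widetilde{T})$ for the weight of $\widetilde{T}$, a direct accounting gives $2w(\T)+\phi(\T)\le 2\,\mathrm{obj}(\widetilde{T})$: doubling reproduces the edge cost $2w(\T)$ exactly and yields one opening term $\phi(u)$ per $r$-edge, which already dominates $\phi(\T)$ since a depot shared by merged trees is paid for only once in $\phi(\T)$.

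It then remains to bound $\mathrm{obj}(\widetilde{T})$, for which I would exhibit a good feasible spanning tree of $\widetilde{G}$ coming from the optimal CLR solution $(O,\I)$. First assign each customer to a single tour covering it, and for each tour shortcut away the customers not assigned to it; by the triangle inequality this only decreases cost, producing for every tour a path rooted at its depot with the depot as an endpoint. Next, for each opened depot $u\in O$, the (possibly several) tour-paths rooted at $u$ are concatenated into one tree by linking the last customer of one path to the first customer of the next, so that $u$ becomes a leaf. The decisive point is that each new linking edge $(c,c')$ satisfies $w(c,c')\le w(c,u)+w(u,c')$, so its cost is charged to the tour edges incident to $u$ that were dropped when the cycles were broken into paths; summing the resulting telescoping bound over all tours at $u$ shows the concatenated tree costs at most the total length of those tours. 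Aggregating over $u\in O$ gives a constrained spanning forest $\T_0$ with every depot a leaf, $w(\T_0)\le\psi^*$, and $\phi(\T_0)\le\phi^*$.

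Finally I would glue the two halves. Because every depot is a leaf of $\T_0$, contracting its depots into $r$ turns $\T_0$ into a spanning tree of $\widetilde{G}$ in which each tree contributes exactly one $r$-edge; its weight is therefore at most $w(\T_0)+\tfrac12\phi(\T_0)\le\psi^*+\tfrac12\phi^*$, whence $\mathrm{obj}(\widetilde{T})\le\psi^*+\tfrac12\phi^*$. Combining with $2w(\T)+\phi(\T)\le 2\,\mathrm{obj}(\widetilde{T})$ yields $2w(\T)+\phi(\T)\le 2\psi^*+\phi^*$, and adding $\phi^*$ to both sides gives $2w(\T)+\phi(\T)+\phi^*\le 2(\psi^*+\phi^*)=2\,\OPT$, as required.

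The step I expect to be the main obstacle is forcing every depot of the certificate forest $\T_0$ to be a leaf while still keeping $w(\T_0)\le\psi^*$: this leaf property is exactly what prevents the opening cost from being over-counted under contraction (a depot of degree $\delta$ would otherwise contribute $\delta\cdot\phi(u)/2$), and it is precisely what sharpens the weaker bound $2w(\T)+\phi(\T)\le 2\,\OPT$ into ours. The concatenation must be arranged so that the extra linking cost is absorbed by the tour edges lost when cycles are opened into paths, which is what the telescoping triangle-inequality estimate delivers; making this bookkeeping tight, rather than leaking an additive term per depot, is the crux of the argument.
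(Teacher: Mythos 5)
Your proof is correct and follows essentially the same route as the paper: both augment depot--customer edges by a surcharge of $\phi(u)/2$, compute an MST-based constrained spanning forest in polynomial time, and certify the bound $2w(\T)+\phi(\T)\le 2\psi^*+\phi^*$ by turning the optimal solution's tours at each depot into a single path with that depot as an endpoint (your telescoping concatenation is just an explicit version of the paper's ``shortcut all tours at $o$ and delete one edge incident to $o$''). The only cosmetic difference is that you contract the depots into the root $r$ with edge weight $\min_u\{w(v,u)+\phi(u)/2\}$, whereas the paper keeps the depots explicit in $G'$ and attaches them to $r$ by zero-weight edges; the accounting is the same either way.
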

\begin{proof}
Given $G=(V\cup U, E, w, \phi)$, we obtain a new graph $G'=(V\cup U, E, w', \phi)$ such that $w'(v,v')=w(v,v')$ for any $v,v'\in V$ and $w'(u,v)=w(u,v)+\frac{1}{2}\phi(u)$ for any $v\in V$ and $u\in U$. As shown in~\cite{HarksKM13}, a minimum constrained spanning forest in $G'$, denoted by $\T$, can be found in polynomial time: first we can create a new vertex $r$ and set $w'(r,u)=0$ for each $u\in U$; then we find a minimum spanning tree in $G'[V\cup U\cup\{r\}]$, which can be done, for example, using the Prim's algorithm~\cite{cormen2022introduction}. Next, we bound $2w(\T)+\phi(\T)$.

Consider an optimal solution $(O,\I)$ of CLR. For each depot $o\in O$, there is a set of tours $\I_o\subseteq\I$ with each in it containing $o$. By shortcutting all tours in $\I_o$ and deleting an edge incident to $o$, we obtain a path $P_o=ov_1\dots v_i$ such that $w'(P_o)=w(P_o)+\frac{1}{2}\phi(o)\leq w(\I_o)+\frac{1}{2}\phi(o)$. Note that the paths in $\{P_o\}_{o\in O}$ may not be vertex-disjoint. Then, we can obtain a constrained spanning forest in $G'$ with a cost of at most $\sum_{o\in O}w'(P_o)$ by further shortcutting. Since $\T$ is a minimum constrained spanning forest in $G'$, we can get 
\begin{equation}\label{eq1}
\begin{split}
w'(\T)&\leq\sum_{o\in O}w'(P_o)\\
&=\sum_{o\in O}\lrA{w(\I_o)+\frac{1}{2}\phi(o)}\\
&\leq w(\I)+\frac{1}{2}\phi(O)\\
&=\psi^*+\frac{1}{2}\phi^*.
\end{split}
\end{equation}

For each $T\in\T$, let $i_T$ be the number of edges in $T$ that are incident to the depot in $U(T)$. Note that $i_T\geq 1$. We can get $w(T)=w'(T)-i_T\cdot \frac{1}{2}\phi(T)\leq w'(T)-\frac{1}{2}\phi(T)$. Hence, 
$
2w(\T)+\phi(\T)=\sum_{T\in\T}(2w(T)+\phi(T))\leq \sum_{T\in\T}2w'(T)=2w'(\T)\leq 2\psi^*+\phi^*,
$
where the second inequality follows from (\ref{eq1}).
Since $\psi^*+\phi^*=\OPT$, we have $2w(\T)+\phi(\T)+\phi^*\leq 2\cdot\OPT$.
\end{proof}

\section{General CLR}\label{sec4}
In this section, we introduce our Tree-Alg, which is a $4.169$-approximation algorithm for unsplittable and splittable CLR.

\subsection{The Algorithm}
Our algorithm uses the framework of the $4.38$-approximation algorithm in~\cite{HarksKM13}. It uses a bifactor approximation algorithm to compute a solution (i.e., a set of depots) for the constructed instance of UFL in Lemma~\ref{lb1}. Then, it computes the constrained spanning forest $\T$ in Lemma~\ref{lb2}. Let $O_1$ be the set of opened depots in the solution of UFL and $O_2$ be the set of depots contained in $\T$. The algorithm will open all depots in $O\coloneqq O_1\cup O_2$. Based on splitting the trees in $\T$ (the tree-splitting procedure in~\cite{HarksKM13}), the algorithm will return a feasible solution in polynomial time.
The framework of the algorithm is shown in Algorithm~\ref{algo:unsplittable UFL}. 
Note that the 4.38-approximation algorithm in~\cite{HarksKM13} corresponds to our algorithm under $\alpha=1$.

\begin{algorithm}[H]
\caption{An improved approximation algorithm for unsplittable and splittable CLR (Tree-Alg)}
\label{algo:unsplittable UFL}
\small
\vspace*{2mm}
\textbf{Input:} An instance of CLR. \\
\textbf{Output:} A feasible solution to CLR.

\begin{algorithmic}[1]
\State Create an UFL instance with edge costs $\widetilde{w}=(2/k)w$ and depot costs $\widetilde{\phi}=\alpha\cdot\phi$ as in Lemma~\ref{lb1}.

\State Apply Byrka and Aardal's bifactor approximation algorithm~\cite{DBLP:journals/siamcomp/ByrkaA10} with a parameter of $\gamma>0$ on the UFL instance, and let $O_1$ be the set of depots opened in the resulting UFL solution.\label{bifactor}

\State Compute a constrained spanning forest $\T$ in $G$ as in Lemma~\ref{lb2}, and let $O_2=U(\T)$ be the set of depots contained in some $T\in\T$.

\State Open all depots in $O\coloneqq O_1\cup O_2$.

\State Obtain a set of feasible tours $\I$ by calling the tree-splitting procedure in Algorithm~\ref{algo:tree-splitting}.

\State Return $(O, \I)$.
\end{algorithmic}
\end{algorithm}

Given a set of opened depots $O$ and a constrained spanning forest $\T$, the tree-splitting procedure is to obtain a set of feasible tours using only depots in $O$, which is also equivalent to solving an instance of MCVRP. 
Note that it also works for the case of splittable demand. Hence, we may assume w.l.o.g. that a customer can have a demand of more than $k$.
The main idea is to repeatedly find a sub-tree $S$ of trees in $\T$ such that $k\geq d(S)>k/2$, and then construct a tour for some of the customers (as the sub-trees are only edge-disjoint and some customers may have already been satisfied) in $V(S)$ by doubling edges in $E(S)\cup \{e_S\}$ and shortcutting, where $e_S$ is a minimum weight edge between one customer in $V(S)$ and one depot in $O$.

The details are as follows.
First, for each customer $v\in V$ with $d(v)>k$, we construct $\ceil{\frac{d(v)}{k}}$ tours for $v$ by connecting $v$ with its nearest opened depot in $O$ and regard $v$ as a zero-demand customer in the following. 
Then, we consider a tree $T_u\in\T$ rooted at $u\in O$, and satisfy all non-zero-demand customers in $V(T_u)$ by splitting $T_u$. Denote the sub-tree rooted at $v$ and the set of $v$'s children by $T_v$ and $Q_v$, respectively.
\begin{itemize}
\item If $d(T_u)\leq k$, we construct a tour for all non-zero-demand customers in $V(T_u)$ by doubling all edges in $E(T_u)$ and then shortcutting. Note that $d(T_u)$ may have a demand of less than $k/2$, but $T_u$ has a good property: it contains an opened depot.
\item Else, we find a minimal sub-tree $T_v$ such that $d(T_v)>k$ and $d(T_{v'})\leq k$ for every $v'\in Q_v$. Consider sub-trees in $\T_v\coloneqq\{v\}\cup\{T_{v'}\mid v'\in Q_v\}$. We greedily partition them into sets $\T_0,\dots,\T_l$ such that $d(\T_i)\leq k$ for each $i$ and $d(\T_i)>k/2$ for each $i>0$.
For each $\T_i$ with $i>0$, we combine trees in $\T_i$ into a sub-tree $S$ by adding the edges joining $v$ and each tree in $\T_i$ and $v$ (if $\{v\}\notin \T_i$).
Then, we find a minimized cost edge $e_S$ connecting one depot in $O$ with one vertex in $V(S)\cup U(S)$. By doubling edges in $E(S)\cup\{e_S\}$ and shortcutting, we construct a tour for all non-zero-demand customers in $V(\T_i)$. At last, we update $T_u$ by removing $V(S)\setminus\{v\}$ and $E(S)$ from $T_u$, and regard $v$ as a zero-demand customer in the following if $\{v\}\in \T_i$.
\end{itemize}


The tree-splitting procedure used in Algorithm~\ref{algo:unsplittable UFL} is shown in Algorithm~\ref{algo:tree-splitting}.
The tours have the following properties.

\begin{lemma}[Lemma 3 in \cite{HarksKM13}]\label{lb3}
Given a set of opened depots $O$ and a constrained spanning forest $\T$, the set of tours $\I$ returned by the tree-splitting procedure holds that $(O,\I)$ is a feasible solution for unsplittable and splittable UFL, and 
$w(\I)\leq 2w(\T)+\sum_{v\in V}(4/k)d(v)\cdot\min_{u\in O}w(v,u)$.
\end{lemma}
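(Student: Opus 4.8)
The plan is to prove the two assertions—feasibility of $(O,\I)$ and the weight bound—by following the three kinds of tours the procedure emits and charging each tour's cost against either the tree weight $w(\T)$ or the connection terms $(4/k)d(v)\min_{u\in O}w(v,u)$. First I would settle feasibility. A tour must contain exactly one depot of $O$, deliver at most $k$ of demand, and the tours together must satisfy every customer. For the high-demand customers (residual demand exceeding $k$, which arise only in the splittable variant), the $\ceil{d(v)/k}$ direct back-and-forth tours to the nearest depot in $O$ obviously meet all three requirements. For the tree-derived tours I would use the fact that every depot appearing in $\T$ lies in $O$ (since the algorithm sets $O_2=U(\T)\subseteq O$): a tour obtained by doubling $E(T_u)$ when $d(T_u)\le k$ is rooted at $u\in O$, and a tour obtained from a part $\T_i$ with $i>0$ is rooted at the depot reached by the connecting edge $e_S$. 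The capacity constraint is immediate from $d(T_u)\le k$ and $d(\T_i)\le k$, while the minimality of the chosen subtree $T_v$ together with the greedy partition guarantees that each customer's subtree is placed in exactly one part, so all demand is delivered (and in the unsplittable case, where $d(v)\le k$ for all $v$, each customer is served by a single tour).

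For the weight bound I would split $w(\I)$ into an edge-doubling part and a connecting-edge part. Each edge of $\T$ is doubled in at most one tour, because once a part is turned into a tour the procedure deletes the corresponding edges $E(S)$ from $T_u$; hence the doubling contributes at most $2w(\T)$. It then remains to charge the edges $e_S$. For a part $\T_i$ with $i>0$, the tour adds the doubled edge $e_S$, and since $e_S$ is the cheapest edge from $V(S)\cup U(S)$ to $O$, we have $w(e_S)\le\min_{u\in O}w(v,u)$ for every customer $v\in V(S)$. Combining this with the partition guarantee $d(\T_i)>k/2$ gives the key inequality
\[
2w(e_S)\le \frac{4}{k}\,w(e_S)\,d(\T_i)\le \sum_{v\in V(\T_i)}\frac{4}{k}\,d(v)\min_{u\in O}w(v,u).
\]
Summing over all parts and all trees bounds the total connecting cost of the tree-derived tours by $\sum_{v}(4/k)d(v)\min_{u\in O}w(v,u)$, and the high-demand tours are charged separately via $\ceil{d(v)/k}\le 2d(v)/k$ when $d(v)>k$, which yields $2\ceil{d(v)/k}\min_{u\in O}w(v,u)\le(4/k)d(v)\min_{u\in O}w(v,u)$. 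Adding the doubling part $2w(\T)$ gives the claimed inequality.

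The hardest part will be the charging argument for the connecting edges, since it rests entirely on the lower bound $d(\T_i)>k/2$; I would therefore need to confirm that the greedy partition genuinely produces parts of residual demand exceeding $k/2$ for every $i>0$, with the small leftover $\T_0$ reabsorbed into $T_u$ and resolved in a later iteration rather than generating its own charge. I would also make sure that the demands $d(\T_i)$ appearing in the inequality are the residual demands actually served by these tours, so that the factor-$(4/k)$ charge assigned to each $v\in V(\T_i)$ is disjoint from the charge already incurred by the high-demand customers in the first phase, and that the iterative reduction of $T_u$ terminates with all customers served.
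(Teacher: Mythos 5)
Your argument is correct and is essentially the standard one: the paper itself imports this lemma from Harks et al.\ without reproving it, but your charging scheme (doubled tree edges bounded by $2w(\T)$, each connecting edge $e_S$ charged at $2w(e_S)\le (4/k)w(e_S)d(\T_i)\le\sum_{v\in V(\T_i)}(4/k)d(v)\min_{u\in O}w(v,u)$ via $d(\T_i)>k/2$, and high-demand customers charged via $\ceil{d(v)/k}\le 2d(v)/k$) mirrors exactly the paper's own proof of the path-splitting analogue (Lemma~\ref{lb6}), with $k/2$ in place of $k$. The feasibility discussion and the points you flag for verification (disjointness of charges, residual demands, the leftover part $\T_0$ being reabsorbed) are all handled correctly by the procedure as stated.
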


\subsection{The Analysis}
For UFL with any constant $\gamma\geq 1.678$, Byrka and Aardal~\cite{DBLP:journals/siamcomp/ByrkaA10} proposed a (bifactor) $(1+2e^{-\gamma}, \gamma)$-approximation algorithm. It returns a solution whose connection cost is at most $(1+2e^{-\gamma})\cdot\psi_{LP}$ and whose opening cost is at most $\gamma\cdot\phi_{LP}$, where $\psi_{LP}$ and $\phi_{LP}$ are the values of the connection cost and the opening cost of an initially computed optimal
fractional LP solution, respectively. In Step~\ref{bifactor} of Algorithm~\ref{algo:unsplittable UFL}, we apply Byrka and Aardal's algorithm on the UFL instance to open a set of depots $O_1$. Hence, the connection cost and the opening cost in the solution satisfies $\sum_{v\in V}d(v)\cdot\min_{u\in O_1}\widetilde{w}(v,u)$ and $\widetilde{\phi}(O_1)$, respectively. Therefore, we have the following bounds.

\begin{lemma}\label{lb4}
It holds that $\sum_{v\in V}d(v)\cdot\min_{u\in O_1}\widetilde{w}(v,u)\leq (1+2e^{-\gamma})\cdot\psi_{LP}$ and $\widetilde{\phi}(O_1)\leq \gamma\cdot\phi_{LP}$.
\end{lemma}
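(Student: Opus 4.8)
The plan is to obtain both inequalities as a direct consequence of the bifactor guarantee of Byrka and Aardal's algorithm~\cite{DBLP:journals/siamcomp/ByrkaA10}, applied to the specific UFL instance constructed in Lemma~\ref{lb1} with edge costs $\widetilde{w}$ and depot costs $\widetilde{\phi}$. There is essentially no combinatorial work to do here: the content of the lemma is just the two defining inequalities of the $(1+2e^{-\gamma}, \gamma)$-bifactor property, so the task reduces to matching the two quantities appearing in the statement with the connection cost and the opening cost of the output solution $O_1$.

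First I would recall that $O_1$ is, by construction (Step~\ref{bifactor} of Algorithm~\ref{algo:unsplittable UFL}), exactly the set of depots returned by running the $(1+2e^{-\gamma}, \gamma)$-bifactor algorithm on this UFL instance with parameter $\gamma$. By the UFL objective in Definition~\ref{UFL}, the connection cost of the solution $O_1$ is $\sum_{v\in V}d(v)\cdot\min_{u\in O_1}\widetilde{w}(v,u)$ and its opening cost is $\widetilde{\phi}(O_1)$. Hence these two expressions are precisely the left-hand sides of the claimed inequalities, which is the only identification that needs to be made explicit.

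Next I would invoke the guarantee itself. Letting $\psi_{LP}$ and $\phi_{LP}$ denote the connection and opening costs of the optimal fractional LP solution of the same instance, the bifactor property states that the returned solution has connection cost at most $(1+2e^{-\gamma})\cdot\psi_{LP}$ and opening cost at most $\gamma\cdot\phi_{LP}$. Substituting the identifications from the previous paragraph yields both inequalities simultaneously, completing the argument.

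Since the proof is in effect a citation, I do not expect a genuine obstacle; the only point requiring care is the applicability condition. The guarantee is stated for $\gamma\geq 1.678$, so I would note that the parameter $\gamma$ chosen in Algorithm~\ref{algo:unsplittable UFL} must lie in this admissible range. This is not a difficulty so much as a constraint to be respected later, when $\gamma$ is tuned jointly with $\alpha$ in the final optimization of the approximation ratio.
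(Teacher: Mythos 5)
Your proposal matches the paper exactly: the paper gives no separate proof of this lemma, deriving it immediately from the $(1+2e^{-\gamma},\gamma)$-bifactor guarantee of Byrka and Aardal's algorithm by identifying $\sum_{v\in V}d(v)\cdot\min_{u\in O_1}\widetilde{w}(v,u)$ and $\widetilde{\phi}(O_1)$ as the connection and opening costs of the returned solution $O_1$. Your added remark that $\gamma$ must satisfy $\gamma\geq 1.678$ is consistent with the paper's later choices ($\gamma=3.168$ and $\gamma=3.0909$), so there is nothing to correct.
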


Recall that $\OPT'$ is the cost of an optimal solution on the UFL instance in Lemma~\ref{lb1}. The cost of the optimal fractional LP solution is at most the cost of an optimal solution. Hence, we have $\psi_{LP}+\phi_{LP}\leq\OPT'$.
By Lemma~\ref{lb1}, we have the following lemma.

\begin{lemma}\label{lb5}
It holds that $\psi_{LP}+\phi_{LP}\leq\OPT'\leq\psi^*+\alpha\cdot\phi^*$.
\end{lemma}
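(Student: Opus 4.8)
The plan is to prove the two inequalities separately, since each rests on a different ingredient already available to us. The second inequality is essentially a restatement of Lemma~\ref{lb1}, while the first is the standard fact that a linear-programming relaxation lower-bounds the integral optimum it relaxes. Chaining the two then yields the stated chain $\psi_{LP}+\phi_{LP}\leq\OPT'\leq\psi^*+\alpha\cdot\phi^*$.

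First I would dispatch the inequality $\OPT'\leq\psi^*+\alpha\cdot\phi^*$, which follows immediately from Lemma~\ref{lb1}. Rearranging $\OPT'+(1-\alpha)\cdot\phi^*\leq\OPT$ and substituting $\OPT=\psi^*+\phi^*$ gives $\OPT'\leq\psi^*+\phi^*-(1-\alpha)\cdot\phi^*=\psi^*+\alpha\cdot\phi^*$, as required. No new argument is needed beyond this algebra.

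Next I would establish $\psi_{LP}+\phi_{LP}\leq\OPT'$. Here I would recall that $\OPT'$ is the cost of an optimal \emph{integral} solution to the UFL instance of Lemma~\ref{lb1}, whereas $\psi_{LP}$ and $\phi_{LP}$ are the connection and opening costs of an optimal \emph{fractional} solution to the natural LP relaxation of that same instance. Since every integral UFL solution induces a feasible fractional solution of equal cost (taking $0/1$ values for the opening and assignment variables), the optimal fractional objective can only be smaller; hence $\psi_{LP}+\phi_{LP}\leq\OPT'$.

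The main obstacle, insofar as there is one, is not a mathematical difficulty but a bookkeeping point: one must ensure that $\psi_{LP}$ and $\phi_{LP}$ genuinely denote the two cost components of an optimal fractional solution of the LP whose integral optimum is $\OPT'$, so that the relaxation comparison is legitimate and the two inequalities refer to the same constructed instance. Once this correspondence is pinned down, both bounds are immediate and combine to give the claim.
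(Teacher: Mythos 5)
Your proposal is correct and matches the paper's own reasoning exactly: the paper likewise obtains $\psi_{LP}+\phi_{LP}\leq\OPT'$ from the standard fact that the optimal fractional LP solution costs no more than the optimal integral one, and then derives $\OPT'\leq\psi^*+\alpha\cdot\phi^*$ by rearranging Lemma~\ref{lb1} with $\OPT=\psi^*+\phi^*$. Nothing further is needed.
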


\begin{algorithm}[H]
\caption{The tree-splitting procedure for CLR}
\label{algo:tree-splitting}
\small
\vspace*{2mm}
\textbf{Input:} An instance of CLR, a set of opened depots $O$, and a constrained spanning forest $\T$. \\
\textbf{Output:} A set of feasible tours $\I$ to CLR.

\begin{algorithmic}[1]
\State Initialize $\I=\emptyset$.

\For{all $v\in V$ with $d(v)> k$}

\State Construct $\ceil{\frac{d(v)}{k}}$ tours for $v$ by connecting $v$ with its nearest opened depot in $O$.
\State Add the tours to $\I$ and regard $v$ as a zero-demand customer in the following.
\EndFor

\For{all $T_u\in\T$}
\While{$d(T_u)>k$}\label{loop}
\State Find $v\in V(T_u)$ with $d(T_v)>k$ and $d(T_{v'})\leq k$ for each $v'\in Q_v$.\label{startloop}
\State Greedily partition trees in $\T_v\coloneqq\{v\}\cup\{T_{v'}\mid v'\in Q_v\}$ into sets $\T_0,\dots,\T_l$ such that $d(\T_i)\leq k$ for each $i$ and $d(\T_i)>k/2$ for each $i>0$. 
\For{$i\in\{1,\dots,l\}$}
\State Combine trees in $\T_i$ into a sub-tree $S$ by adding the edges joining $v$ and each tree in $\T_i$, and $v$ (if $\{v\}\notin \T_i$).\label{start-tour2}
\State Find an edge $e_S$ with minimized cost connecting one depot in $O$ with one vertex in $V(S)\cup U(S)$.
\State Construct a tour for all non-zero-demand customers in $V(\T_i)$ by doubling edges in $E(S)\cup \{e_S\}$ and shortcutting.
\State Add the tour to $\I$, update $T_u$ by removing $V(S)\setminus\{v\}$ and $E(S)$ from $T_u$, and regard $v$ as a zero-demand customer in the following if $\{v\}\in \T_i$.
\EndFor
\EndWhile
\State Construct a tour for all non-zero-demand customers in $V(T_u)$ by doubling edges in $E(T_u)$ and shortcutting.\label{tour3}
\State Add the tour to $\I$.
\EndFor
\State Return $\I$.
\end{algorithmic}
\end{algorithm}

\begin{theorem}\label{theorem1}
For unsplittable and splittable CLR, Tree-Alg is a polynomial-time $4.169$-approximation algorithm.
\end{theorem}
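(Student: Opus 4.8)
The plan is to bound the total cost $w(\I)+\phi(O)$ of the returned solution against $\OPT$ by controlling the opening part and the routing part separately, and then to collapse everything into a small optimization over the two free parameters $\alpha$ and $\gamma$. Feasibility and the polynomial running time are immediate: every step of Algorithm~\ref{algo:unsplittable UFL} runs in polynomial time, and Lemma~\ref{lb3} guarantees that $(O,\I)$ is feasible for both the splittable and unsplittable variants, so the entire content lies in the ratio.

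First I would split the opening cost as $\phi(O)\le\phi(O_1)+\phi(O_2)$. Since $O_2=U(\T)$, the second term equals $\phi(\T)$. For the first, Lemma~\ref{lb4} gives $\widetilde{\phi}(O_1)\le\gamma\cdot\phi_{LP}$, and because $\widetilde{\phi}=\alpha\cdot\phi$ this yields $\phi(O_1)\le(\gamma/\alpha)\cdot\phi_{LP}$. For the routing cost, Lemma~\ref{lb3} bounds $w(\I)$ by $2w(\T)+\sum_{v\in V}(4/k)d(v)\min_{u\in O}w(v,u)$; using $O_1\subseteq O$ to replace $\min_{u\in O}$ by $\min_{u\in O_1}$ and rewriting $(4/k)w=2\widetilde{w}$, Lemma~\ref{lb4} converts this into $w(\I)\le 2w(\T)+2(1+2e^{-\gamma})\cdot\psi_{LP}$. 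Adding the two estimates, the forest terms collect into $2w(\T)+\phi(\T)$, which Lemma~\ref{lb2} bounds by $2\cdot\OPT-\phi^*$.

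At this point the total cost is at most $2\cdot\OPT-\phi^*+A\cdot\psi_{LP}+B\cdot\phi_{LP}$, where $A=2(1+2e^{-\gamma})$ and $B=\gamma/\alpha$. The key step is to bound the LP terms: since $\psi_{LP},\phi_{LP}\ge0$ and, by Lemma~\ref{lb5}, $\psi_{LP}+\phi_{LP}\le\OPT'\le\psi^*+\alpha\cdot\phi^*$, I would use $A\psi_{LP}+B\phi_{LP}\le M\cdot(\psi^*+\alpha\phi^*)$ with $M\coloneqq\max\{A,B\}$ (the worst case places all LP mass on the larger coefficient). Substituting $\OPT=\psi^*+\phi^*$ and collecting terms gives a bound of the form $(2+M)\psi^*+(1+\alpha M)\phi^*$. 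Since $\psi^*$ and $\phi^*$ are independent nonnegative quantities summing to $\OPT$, the resulting approximation ratio is $\max\{2+M,\,1+\alpha M\}$.

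Finally I would optimize over $\alpha$ and $\gamma$. The cleanest regime balances all three quantities: set $A=B$ (so that $M=2+4e^{-\gamma}$ and $\alpha=\gamma/M$, making the $\max\{A,B\}$ relaxation tight) and balance the two coefficients $2+M=1+\alpha M$ (so that $\alpha=1+1/M$). Together these force $\gamma=M+1=3+4e^{-\gamma}$, whose solution $\gamma\approx3.169$ satisfies $\gamma\ge1.678$ (required for the Byrka--Aardal bifactor guarantee), gives $\alpha\approx1.46>1$ (explaining why allowing $\alpha>1$ beats the earlier $\alpha=1$ choice), and yields ratio $2+M=1+\gamma\approx4.169$. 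The main obstacle is precisely this last optimization: I must verify that taking the larger of the two coefficients correctly captures the worst case over $\phi^*\in[0,\OPT]$, that the $\max\{A,B\}$ relaxation is lossless at the chosen point, and that $\gamma\approx3.169$ is the global minimizer of $\max\{2+M,\,1+\alpha M\}$ rather than merely a balanced stationary point; checking the boundary behavior in $\alpha$ and $\gamma$ is where the care is needed.
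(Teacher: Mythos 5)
Your proposal is correct and follows essentially the same route as the paper: the same decomposition via Lemmas~\ref{lb2}--\ref{lb5}, the same bound $\max\{2+f_\gamma(\alpha),\,1+\alpha f_\gamma(\alpha)\}$ with $f_\gamma(\alpha)=\max\{2(1+2e^{-\gamma}),\gamma/\alpha\}$, and the same balanced choice ($\gamma\approx3.168$, $\alpha\approx1.461$) yielding $4.169$. Your worry about certifying a global minimizer is unnecessary: to prove the stated ratio one only needs to exhibit one admissible pair $(\alpha,\gamma)$ achieving it, which your balancing argument (and the paper's explicit numerical choice) already does.
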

\begin{proof}
By Algorithm~\ref{algo:unsplittable UFL}, it returns a solution $(O,\I)$ such that 
\begin{align*}
w(\I)+\phi(O)&\leq2w(\T)+\sum_{v\in V}(4/k)d(v)\cdot\min_{u\in O}w(v,u)+\phi(O_1)+\phi(O_2)\\
&\leq 2w(\T)+\sum_{v\in V}2d(v)\cdot\min_{u\in O_1}\widetilde{w}(v,u)+1/\alpha\cdot\widetilde{\phi}(O_1)+\phi(\T)\\
&\leq 2\psi^*+\phi^*+2(1+2e^{-\gamma})\cdot\psi_{LP}+(1/\alpha)\cdot\gamma\cdot\phi_{LP},
\end{align*}
where the first inequality follows from $w(\I)\leq 2w(\T)+\sum_{v\in V}(4/k)d(v)\cdot\min_{u\in O}w(v,u)$ by Lemma~\ref{lb3} and $\phi(O)\leq\phi(O_1)+\phi(O_2)$, the second inequality follows from $\min_{u\in O}w(v,u)\leq \min_{u\in O_1}w(v,u)$, $\widetilde{w}=(2/k)w$, $\widetilde{\phi}=\alpha\cdot\phi$, and $\phi(O_2)=\phi(\T)$, and the last inequality follows from $\OPT=\psi^*+\phi^*$ and Lemmas~\ref{lb2} and \ref{lb4}.

Let $f_\gamma(\alpha)\coloneqq \max\{2(1+2e^{-\gamma}),(1/\alpha)\cdot\gamma\}$. 
Since $\OPT=\psi^*+\phi^*$, the approximation ratio is 
\begin{align*}
\frac{2\psi^*+\phi^*+f_\gamma(\alpha)\cdot(\psi_{LP}+\phi_{LP})}{\psi^*+\phi^*}&\leq \frac{2\psi^*+\phi^*+f_\gamma(\alpha)\cdot(\psi^*+\alpha\cdot\phi^*)}{\psi^*+\phi^*}\\
&\leq\max\{2+f_\gamma(\alpha), 1+\alpha\cdot f_\gamma(\alpha)\},
\end{align*}
where the first inequality follows from Lemma~\ref{lb5}.

Setting $\alpha=1.461$ and $\gamma=3.168$, we get $f_\gamma(\alpha)\leq 2.169$ and $\max\{2+f_\gamma(\alpha), 1+\alpha\cdot f_\gamma(\alpha)\}\leq 4.169$. Hence, the approximation ratio of Algorithm~\ref{algo:unsplittable UFL} is at most $4.169$.

Since the bifactor approximation algorithm~\cite{DBLP:journals/siamcomp/ByrkaA10} and the tree-splitting procedure~\cite{HarksKM13} run in polynomial time, Tree-Alg also runs in polynomial time.
\end{proof}

\section{Splittable CLR}\label{sec5}
In this section, we introduce our Path-Alg, which is a $4.091$-approximation algorithm for splittable CLR.


In Path-Alg, we focus on splitting paths. A path, as a special case of tree, has a simpler structure: for the case of splittable demand, one can greedily divide it into several segments of demand $k$ and possibly leave the last segment (containing the depot) with a demand of less than $k$. 
Recall that the tree-splitting procedure is mainly to obtain a sub-tree $S$ such that $k\geq d(S)>k/2$ repeatedly.
Therefore, based on a path-splitting procedure, we may reduce the connection part of the routing cost from $\sum_{v\in V}(4/k)d(v)\cdot\min_{u\in O}w(v,u)$ to $\sum_{v\in V}(2/k)d(v)\cdot\min_{u\in O}w(v,u)$. 
However, using this method does not straightforwardly lead to a better approximation ratio for splittable CLR, as it may involve using a $\delta$-approximation algorithm for metric TSP to obtain a set of paths, which is more expensive than the cost of the constrained spanning forest computed in Lemma~\ref{lb2}.
We will obtain a good approximation ratio based on this idea with novel analysis.

\subsection{The Algorithm}
Compared to Tree-Alg, Path-Alg has two main modifications: firstly, it computes a constrained spanning path-packing rather than a constrained spanning forest, and secondly, it constructs a set of tours through a path-splitting procedure instead of the tree-splitting procedure.

To compute a constrained spanning path-packing, we construct two new graphs $G'$ and $H$. Given $G=(V\cup U, E, w,\phi)$, we obtain a new graph $G'=(V\cup U, E, w',\phi)$ such that $w'(v,v')=w(v,v')$ for any $v,v'\in V$ and $w'(u,v)=w(u,v)+\theta\cdot\phi(u)$ for any $v\in V$ and $u\in U$, where $\theta$ is a constant defined later. Then, we obtain another graph $H=(V\cup\{r\}, F, c)$ such that $c(r,v)=\min_{u\in U}w'(u,v)$ and $c(v,v')=\min\{w'(v,v'), c(r,v)+c(r,v')\}$. One can also think that $H$ is obtained by contracting all depots in $U$ as a super-depot $r$ and then taking a metric closure of $G'$. Note that the edge weight functions in new graphs $G'$ and $H$ are still metric functions.

To obtain the desired constrained spanning path-packing in $G$, we may consider using an approximation algorithm for metric TSP. 

\begin{lemma}\label{pathpack}
Given a $\delta$-approximation algorithm for metric TSP, there is a polynomial-time algorithm to compute a good constrained spanning path-packing $\P$ in $G$.
\end{lemma}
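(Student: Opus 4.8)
The plan is to reduce the construction of $\P$ to a single call of the metric TSP oracle on $H$ and then recover a valid path-packing by unfolding the returned tour back into $G$. First I would run the $\delta$-approximation algorithm for metric TSP on $H$ to obtain a Hamiltonian cycle $C$ on $V\cup\{r\}$. Since $r$ is the contracted super-depot and the weights of $H$ form the metric closure of those of $G'$, every edge of $C$ can be read back in $G'$: a customer-customer edge $vv'$ realizing $c(v,v')=w'(v,v')$ stays as is, while each edge incident to $r$ and each customer-customer edge realizing $c(v,v')=c(r,v)+c(r,v')$ unfolds into depot-arms $w'(u_v,v)$, where $u_v\in U$ attains $c(r,v)=\min_{u\in U}w'(u,v)$. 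Reading $C$ back in this way turns it into a collection of vertex-disjoint customer-paths partitioning $V$, each capped at both terminals by a depot-arm; this is the ``set of cycles and paths'' alluded to earlier.

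To certify that $C$ is cheap I would bound the optimal TSP value on $H$ against the CLR optimum. Fix an optimal solution $(O,\I)$ and first assign each customer to one of the tours serving it. For each $u\in O$, let $\I_u$ be the tours of $\I$ rooted at $u$; they all share $u$ and are cycles, so their union is connected and Eulerian, and shortcutting its Euler tour (skipping duplicated customers and those assigned elsewhere) yields a single cycle $C_u$ through $u$ and the customers assigned to $u$ with $w(C_u)\le w(\I_u)$. Contracting $u$ to $r$, each of the two depot-arms of $C_u$ costs at most $w(u,\cdot)+\theta\phi(u)$ in $H$, so $c(C_u)\le w(\I_u)+2\theta\phi(u)$. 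The cycles $\{C_u\}_{u\in O}$ all pass through $r$, hence their union is connected and Eulerian on $V\cup\{r\}$; shortcutting its Euler tour gives a Hamiltonian cycle of cost at most $\sum_{u\in O}c(C_u)\le\psi^*+2\theta\phi^*$. Thus the TSP optimum on $H$ is at most $\psi^*+2\theta\phi^*$, and the returned cycle satisfies $c(C)\le\delta(\psi^*+2\theta\phi^*)$.

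It remains to turn the unfolded image of $C$ into a valid constrained spanning path-packing without increasing cost. Each unfolded path carries two depots, so I would delete one depot-arm from each (keeping a single depot as a terminal) and then eliminate the remaining violations of the packing constraints---namely a depot serving as a terminal of two different paths---by shortcutting depots out of paths and redistributing the one retained depot per path; every such operation only deletes edges or invokes the triangle inequality, so it can only decrease cost. Writing $w'(\P)=w(\P)+\theta\phi(\P)$ (each path has exactly one depot-incident edge and, after the clean-up, the retained depots are distinct), and noting that the unfolded image has total $w'$-cost exactly $c(C)$, I obtain
\[
w(\P)+\theta\phi(\P)\;\le\;w'(\P)\;\le\;c(C)\;\le\;\delta\,(\psi^*+2\theta\phi^*),
\]
which is the ``good'' guarantee that the later analysis of Path-Alg relies on.

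The hard part will be this last step: guaranteeing that arms can be deleted so that every path ends at a \emph{distinct} depot while staying within the $c(C)$ budget. Because the metric closure in $H$ may route many customer pairs through $r$, the unfolded structure can contain many depot-arms and hence many conflicts, and resolving a conflict by rerouting a terminal to a non-nearest depot would break the cost bound. The careful shortcutting-and-deleting argument---merging or shortcutting conflicting paths at a shared depot rather than rerouting to a more expensive one---is what makes the construction simultaneously feasible and cost-preserving, and is the technical heart of the lemma.
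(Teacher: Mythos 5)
Your overall route is the paper's route: contract the depots into $r$, run the $\delta$-approximation on $H$, unfold the Hamiltonian cycle back into $G'$ as customer segments capped by depot arms, and then clean this up into a constrained spanning path-packing. Your bound $c(C^*)\le\psi^*+2\theta\phi^*$ and its proof (per-depot Eulerian shortcutting of $\I_o$) are exactly the paper's Lemma~\ref{lb8}, which the paper states separately rather than inside the proof of Lemma~\ref{pathpack}. So the skeleton matches.

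The genuine gap is in the clean-up, and it comes from the order of your two operations. You first delete one arm from every unfolded segment (leaving each segment with a single depot terminal) and only then try to resolve the situation where one depot is the retained terminal of several segments. At that point the resolution you need does not exist within the shortcutting toolkit: if $u$ is the sole depot of both $u\text{--}P_A$ and $u\text{--}P_B$, shortcutting $u$ out of one of them leaves a depot-free path, and merging them through $u$ leaves $u$ in the interior rather than at a terminal; gluing the far end of $P_A$ to $P_B$ cannot be charged to existing edges. Your last paragraph correctly flags this as the hard part but does not resolve it. The paper avoids the problem by reversing the order: it first shortcuts shared depots while every segment still has a depot arm at \emph{both} ends, so that eliminating a shared incidence of $u$ merges two segments into one that again has depots at both terminals; this yields the constrained spanning cycle-packing $\C$ with each depot in at most one component, and only then is one depot-incident edge deleted per component. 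A secondary issue: the inequality you end with, $w(\P)+\theta\phi(\P)\le c(C)$, discards the savings from the deleted arms; the downstream analysis (Lemma~\ref{lb9}) needs the stronger $2w(\P)+\phi(\P)\le 2w'(\C)$, which is obtained precisely by crediting the $\theta\phi$ term of each deleted arm against the $(1-2\theta)\phi(\P)$ surplus under $\theta=\tfrac14$. With only your inequality the $4.091$ ratio does not follow.
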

\begin{proof}
By applying a $\delta$-approximation algorithm for metric TSP in $H$, we can obtain a Hamiltonian cycle $C$ in $H$. Note that $C$ corresponds to a subgraph of $G'$ where each vertex in $V$ has an even degree but the vertex in $U$ may has an odd degree. 
Therefore, we can obtain a constrained spanning cycle-packing $\C$ in $G'$ by shortcutting.
Note that the shortcutting can ensure that each depot is contained in at most one cycle or path in $\C$.

Then, we can transform $\C$ into a constrained spanning path-packing $\P$ by doing: (1) for each cycle $uv_1\dots v_iu\in\C$ we delete the edge with a smaller cost from $uv_1$ and $uv_i$, and (2) for each path $uv_1\dots v_iu'\in\C$ we delete the edge incident to the depot with a smaller opening cost from $uv_1$ and $u'v_i$. We can see that each depot is contained in at most one path in $\P$ (Recall Figure~\ref{fig:whole_figure}).
\end{proof}

While the constrained spanning path-packing computed in Lemma~\ref{pathpack} may not be a $\delta$-approximate constrained spanning path-packing, it will still be useful in our analysis. An illustration of the constrained spanning path-packing algorithm in Lemma~\ref{pathpack} can be found in Figure~\ref{ALG-1}.

\begin{figure}[ht]
    \centering
    \begin{subfigure}[b]{0.48\textwidth}
        \centering
        \includegraphics[width=0.95\textwidth]{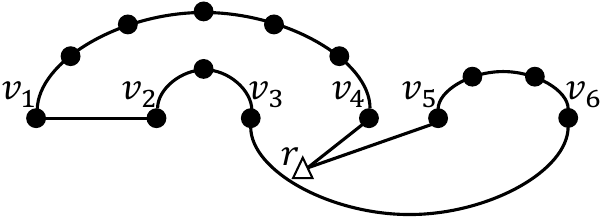}
        \caption{A $\delta$-approximate Hamiltonian cycle in graph $H$, where $H$ is obtained by contracting all depots in $G'$ as a super-depot $r$.\\}
        \label{bfig1}
    \end{subfigure}
    \hfill
    \begin{subfigure}[b]{0.48\textwidth}
        \centering
        \includegraphics[width=0.95\textwidth]{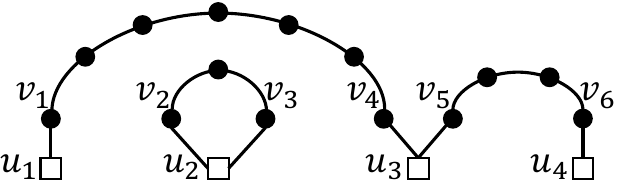}
        \caption{The corresponding subgraph of $G'$ w.r.t. the Hamiltonian cycle in $H$, where the edges $v_1v_2$, $v_3v_6$, $rv_4$, and $rv_5$ in $H$ correspond to the edges in $\{u_1v_1,u_2v_2\}$, $\{u_2v_3,u_4v_6\}$, $\{u_3v_4\}$, and $u_3v_5$.}
        \label{bfig2}
    \end{subfigure}
    
    \vspace{2mm}
    
    \begin{subfigure}[b]{0.48\textwidth}
        \centering
        \includegraphics[width=0.95\textwidth]{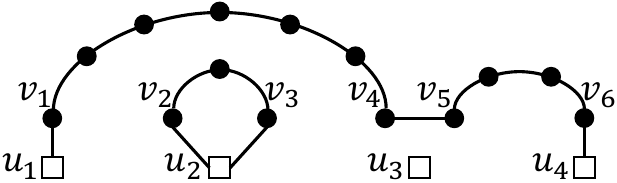}
        \caption{A constrained spanning cycle-packing $\C$ in $G'$, obtained by shortcutting the subgraph of $G'$.\\}
        \label{bfig3}
    \end{subfigure}
    \hfill
    \begin{subfigure}[b]{0.48\textwidth}
        \centering
        \includegraphics[width=0.95\textwidth]{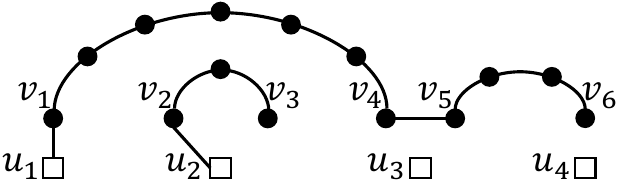}
        \caption{A constrained spanning path-packing $\P$ in $G'$ (and $G$), obtained by carefully deleting the edges incident to the depots in $\C$.}
        \label{bfig4}
    \end{subfigure}
    \caption{An illustration of the constrained spanning path-packing algorithm in Lemma~\ref{pathpack}, where each cycle node represents a customer, each square node represents a depot, and the triangle node represents the super-depot.}
    \label{ALG-1}
\end{figure}

The framework of our Path-Alg for splittable CLR can be seen in Algorithm~\ref{algo:splittable UFL}.

The path-splitting procedure works as follows. 
First, for each customer $v\in V$ with $d(v)>k$, we construct $\ceil{\frac{d(v)}{k}}-1$ tours for $v$ (with each delivering $k$ of demand) by connecting $v$ with its nearest opened depot in $O$ and update $d(v)\coloneqq d(v)-k\cdot (\ceil{\frac{d(v)}{k}}-1)$. In the following, the demand of each customer $v$ holds that $0<d(v)\leq k$. Then, we consider a path $P_u\in\P$ rooted at $u\in O$, and satisfy all customers in $V(P_u)$ by splitting $P_u$. Denote the sub-path rooted at $v$ and $v$'s unique children by $P_v$ and $q(v)$, respectively. 
We consider the following two cases.

\begin{itemize}
\item If $d(P_u)\leq k$, we get a tour for all customers in $V(P_u)$ by doubling all edges in $E(P_u)$ and then shortcutting.
\item Otherwise, we can do the following repeatedly until it satisfies that $d(P_u)\leq k$. First, we can find a customer $v\in V(P_u)$ such that $d(P_v)>k$ and $d(P_{q(v)})\leq k$. Consider the sub-path $P_v$. Then, we find an edge $e_{P_v}$ with minimized cost connecting one depot in $O$ with one vertex in $V(P_v)$. By doubling edges in $E(P_v)\cup\{e_S\}$ and shortcutting, we construct a tour $I$ for all customers in $V(P_v)$ with demand assignments: $x_{vI}=k-d(P_{q(v)})$ and $x_{v'I}=d(v')$ for each $v'\in V(P_{q(v)})$. At last, we update $P_u$ by removing $V(P_{q(v)})$ and $E(P_v)$ and setting $d(v)\coloneqq d(v)-x_{vI}$.
\end{itemize}

\begin{algorithm}[H]
\caption{An improved approximation algorithm for splittable CLR (Path-Alg)}
\label{algo:splittable UFL}
\small
\vspace*{2mm}
\textbf{Input:} An instance of CLR. \\
\textbf{Output:} A feasible solution to CLR.

\begin{algorithmic}[1]
\State Create an UFL instance with edge costs $\widetilde{w}=(2/k)w$ and depot costs $\widetilde{\phi}=\alpha\cdot\phi$ as in Lemma~\ref{lb1}.

\State Apply Byrka and Aardal's bifactor approximation algorithm~\cite{DBLP:journals/siamcomp/ByrkaA10} with a parameter of $\gamma>0$ on the UFL instance and let $O_1$ be the set of depots opened in the resulting UFL solution.

\State Compute a constrained spanning path-packing $\P$ in $G$ as in Lemma~\ref{pathpack} and let $O_2=U(\P)$ be the set of depots contained in some $P\in\P$.

\State Open all depots in $O\coloneqq O_1\cup O_2$.

\State Obtain a set of feasible tours $\I$ by calling the path-splitting procedure in Algorithm~\ref{algo:path-splitting}.

\State Return $(O, \I)$.
\end{algorithmic}
\end{algorithm}

The path-splitting procedure used in Algorithm~\ref{algo:splittable UFL} is shown in Algorithm~\ref{algo:path-splitting}. 
An example of the path-splitting procedure can be found in Figure~\ref{ALG-2}.

\begin{algorithm}[H]
\caption{The path-splitting procedure for splittable CLR}
\label{algo:path-splitting}
\small
\vspace*{2mm}
\textbf{Input:} An instance of CLR, a set of opened depots $O$, and a constrained spanning path-packing $\P$. \\
\textbf{Output:} A set of feasible tours $\I$ to CLR.

\begin{algorithmic}[1]
\State Initialize $\I=\emptyset$.

\For{all $v\in V$ with $d(v)>k$}
\State Construct $\ceil{\frac{d(v)}{k}}-1$ tours for $v$ by connecting $v$ with its nearest opened depot in $O$ (each delivers $k$ of demand).
\State Add the tours to $\I$ and update $d(v)\coloneqq d(v)-k\cdot (\ceil{\frac{d(v)}{k}}-1)$.\label{tour1+}
\EndFor

\For{all $P_u\in\P$}\Comment{$P_u$ is rooted at the depot $u\in O$}
\While{$d(P_u)>k$}\label{loop+}
\State Find $v\in V(P_u)$ with $d(P_{v})>k$ and $d(P_{q(v)})\leq k$.\Comment{$q(v)$ is $v$'s unique children}\label{startloop+}
\State Find an edge $e_{P_v}$ with minimized cost connecting one depot in $O$ with one vertex in $V(P_v)$.
\State Construct a tour $I$ for all customers in $V(P_v)$ by doubling edges in $E(P_v)\cup \{e_{P_v}\}$ and shortcutting with demand assignments: $x_{vI}=k-d(P_{q(v)})$ and $x_{v'I}=d(v')$ for each $v'\in V(P_{q(v)})$.\label{tour2-}
\State Add the tour to $\I$, update $P_u$ by removing $V(P_{q(v)})$ and $E(P_v)$ from $P_u$, and update $d(v)\coloneqq d(v)-x_{vI}$.\label{tour2+}
\EndWhile
\State Construct a tour for all customers in $V(P_u)$ by doubling edges in $E(P_u)$ and shortcutting. Add the tour to $\I$.\label{tour3+}
\EndFor
\State Return $\I$.
\end{algorithmic}
\end{algorithm}

\begin{figure}[ht]
    \centering
    \begin{subfigure}[b]{0.96\textwidth}
        \centering
        \includegraphics[width=0.95\textwidth]{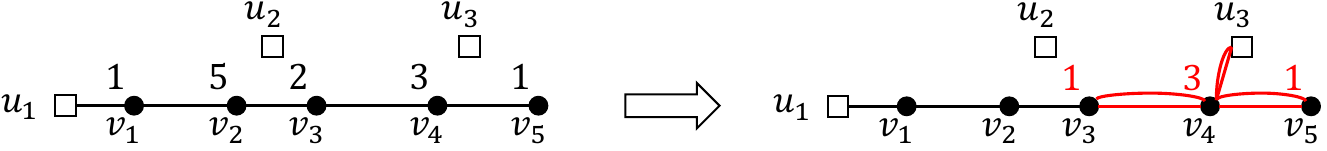}
        \caption{Consider a path $P_{u_1}=v_1...v_5$, rooted at $u_1$, in the constrained spanning path-packing $\P$, where the customers have a demand of 1, 5, 2, 3, and 1, respectively; In the first loop, the path-splitting procedure picks the sub-path $P_{v_3}$, finds an edge $e_{P_{v_3}}=u_3v_4$, and obtains a tour by doubling all edges in $E(P_{v_3})\cup \{e_{P_{v_3}}\}$ and shortcutting.}
        \label{cfig1}
    \end{subfigure}
    
    \vspace{2mm}

    \begin{subfigure}[b]{0.96\textwidth}
        \centering
        \includegraphics[width=0.95\textwidth]{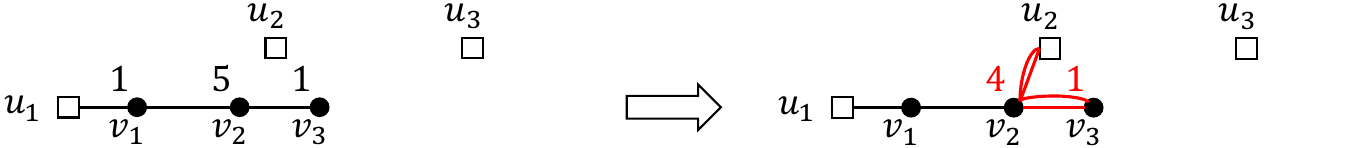}
        \caption{The path-splitting procedure updates $P_{u_1}$ by removing $V(P_{v_4})$ and $E(P_{v_3})$ from $P_{u_1}$ and updates $d(v_3)\coloneqq 2-1=1$, and then the remaining customers have a demand of 1, 5, and 1, respectively; In the second loop, the path-splitting procedure picks the sub-path $P_{v_2}$, finds an edge $e_{P_{v_3=2}}=u_2v_2$, and obtains a tour by doubling all edges in $E(P_{v_2})\cup \{e_{P_{v_2}}\}$ and shortcutting.}
        \label{cfig2}
    \end{subfigure}
    
    \vspace{2mm}
    
    \begin{subfigure}[b]{0.96\textwidth}
        \centering
        \includegraphics[width=0.95\textwidth]{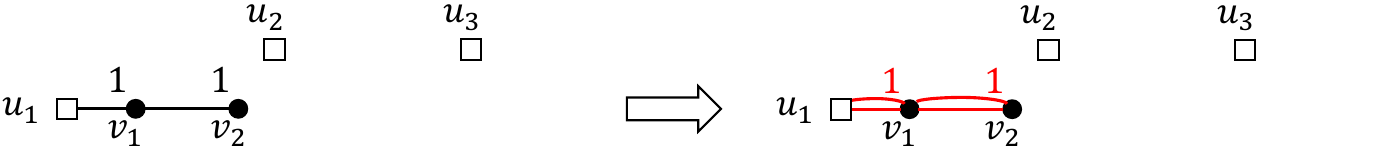}
        \caption{The path-splitting procedure updates $P_{u_1}$ by removing $V(P_{v_3})$ and $E(P_{v_2})$ from $P_{u_1}$ and updates $d(v_2)\coloneqq 5-4=1$, and then the remaining customers have a demand of 1, and 1, respectively; In the third loop, the path-splitting procedure picks the path $P_{u_1}$, obtains a tour by doubling all edges in $E(P_{u_1})$ and shortcutting, and removes $P_{u_1}$.}
        \label{cfig3}
    \end{subfigure}
    \caption{An example of the path-splitting procedure, where each cycle node represents a customer and each square node represents a depot.}
    \label{ALG-2}
\end{figure}

\begin{lemma}\label{lb6}
Given a set of opened depots $O$ and a constrained spanning path-packing $\P$, the path-splitting procedure can use polynomial time to obtain a set of tours $\I$ such that $(O,\I)$ is a feasible solution for splittable UFL and 
$w(\I)\leq 2w(\P)+\sum_{v\in V}(2/k)d(v)\cdot\min_{u\in O}w(v,u)$.
\end{lemma}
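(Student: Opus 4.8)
The plan is to follow the same three-part template as the tree-splitting analysis behind Lemma~\ref{lb3} — establish feasibility, running time, and the routing-cost bound — but to exploit the linear structure of paths in order to sharpen the connection term from $(4/k)$ to $(2/k)$. First I would classify the tours produced by Algorithm~\ref{algo:path-splitting} into three families: the \emph{direct} tours built in the first \textbf{for} loop (around Line~\ref{tour1+}) for customers with $d(v)>k$, each a single doubled depot--customer edge carrying exactly $k$ of demand; the \emph{inner} tours built in the while-loop on Line~\ref{tour2-} by doubling $E(P_v)\cup\{e_{P_v}\}$; and the \emph{final} tours built for each residual path on Line~\ref{tour3+} by doubling $E(P_u)$. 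Since every tour is obtained by doubling an edge set and then shortcutting, the triangle inequality gives $w(I)\le 2\,w(E_I)$ for the doubled set $E_I$; and each $E_I$ contains an opened depot (via the direct edge, via $e_{P_v}$, or via the root $u$ of $P_u$), so every tour starts and ends at a depot of $O$.

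For the cost bound I would split $w(\I)$ into a \emph{path part} and a \emph{connection part}. Because each while-loop iteration deletes $E(P_v)$ from $P_u$ immediately after doubling it, and the final tour doubles exactly what remains, the edges of each original path $P_u$ are doubled once in total; the edges $e_{P_v}$ are depot--customer edges disjoint from the (customer--customer) edges of the subpaths $P_v$, so they do not interfere. Summing over $\P$ contributes precisely $2\,w(\P)$. The connection part consists of the doubled edges $e_{P_v}$ and the doubled direct edges. For the charging argument I would use that $e_{P_v}$ is the cheapest edge from $O$ to $V(P_v)$, so $w(e_{P_v})\le \min_{u\in O}w(c,u)$ for every customer $c$ served by that inner tour $I$; combined with the fact (argued below) that $I$ delivers exactly $k$ units, this yields $2\,w(e_{P_v})=(2/k)\sum_{c}x_{cI}\,w(e_{P_v})\le (2/k)\sum_{c}x_{cI}\min_{u\in O}w(c,u)$, and the direct tours satisfy the same inequality with equality. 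Summing over all direct and inner tours, adding the nonnegative contributions from the final tours, and invoking demand conservation $\sum_{I}x_{cI}=d(c)$ then collapses the connection part into $\sum_{v\in V}(2/k)d(v)\min_{u\in O}w(v,u)$, which together with the $2\,w(\P)$ from the path part gives the claimed inequality.

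The crux, and the step I expect to require the most care, is verifying that every inner tour carries \emph{exactly} $k$ units of demand, since this is precisely what upgrades the factor from $(4/k)$ (all the tree analysis can guarantee, where a cut subtree only satisfies $d(S)>k/2$) to $(2/k)$. Here I would use that the boundary customer $v$ selected on Line~\ref{startloop+} satisfies $d(P_v)>k$ and $d(P_{q(v)})\le k$, so the assignment $x_{vI}=k-d(P_{q(v)})\ge 0$ together with $x_{v'I}=d(v')$ for $v'\in V(P_{q(v)})$ loads the tour to $(k-d(P_{q(v)}))+d(P_{q(v)})=k$. I would further check $x_{vI}<d(v)$, which follows from $d(v)=d(P_v)-d(P_{q(v)})>k-d(P_{q(v)})$, so that the residual demand $d(v)-x_{vI}>0$ is correctly preserved for later iterations and no customer is over- or under-served; and since after the initial loop every customer has $0<d(v)\le k$, the final tours never exceed capacity. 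Finally, because each while-loop iteration deletes the nonempty vertex set $V(P_{q(v)})$, the procedure terminates after polynomially many iterations, which simultaneously establishes the polynomial running time and completes the feasibility verification.
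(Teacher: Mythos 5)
Your proposal is correct and follows essentially the same argument as the paper's proof: you partition the tours into the same three families ($\I_1,\I_2,\I_3$), charge each doubled connecting edge $e_I$ at rate $2/k$ per unit of delivered demand using the fact that every direct and inner tour carries exactly $k$ units, and observe that each edge of $\P$ is doubled only once because it is deleted after use. Your additional checks (nonnegativity and strict positivity of the residual $d(v)-x_{vI}$, and termination via deletion of $V(P_{q(v)})$) are details the paper leaves implicit but are consistent with its proof.
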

\begin{proof}
First, it is easy to see that each customer is satisfied, i.e., $\sum_{I\in\I}\sum_{v\in V(I)}x_{vI}=d(v)$.

Let $\I_1$, $\I_2$, and $\I_3$ be the set of tours collected in Steps~\ref{tour1+}, \ref{tour2+}, and \ref{tour3+}, respectively. Then, we have $\I=\I_1\cup\I_2\cup\I_3$. For a tour $I\in\I_1\cup\I_2$, we denote the edge connecting one customer in $I$ and one depot in $O$ by $e_I$. Then, since $\sum_{v\in V(I)}x_{vI}=k$ by Algorithm~\ref{algo:path-splitting}, we have 
$
w(e_I)=\min_{v\in V(I),u\in O}w(v,u)\leq\sum_{v\in V(I)}(1/k) x_{vI}\cdot\min_{u\in O}w(v,u).
$
The tours in $\I$ are generated by doubling edges of $E(\P)\cup\{e_I\mid I\in\I_1\cup\I_2\}$ and shortcutting. Every edge in $E(\P)$ is used for only one tour since it will be removed in Step~\ref{tour2+} of Algorithm~\ref{algo:path-splitting} once after it is used in Step~\ref{tour2-}. 
Therefore, we can get that 
\begin{align*}
w(\I)&\leq2w(\P)+\sum_{I\in\I_1\cup\I_2}2w(e_I)\\
&\leq2w(\P)+\sum_{I\in\I_1\cup\I_2}\sum_{v\in V(I)}(2/k) x_{vI}\cdot\min_{u\in O}w(v,u)\\
&\leq2w(\P)+\sum_{I\in\I}\sum_{v\in V(I)}(2/k) x_{vI}\cdot\min_{u\in O}w(v,u)\\
&=2w(\P)+\sum_{v\in V}(2/k)d(v)\cdot\min_{u\in O}w(v,u),
\end{align*}
where
the last equality from $\sum_{I\in\I}\sum_{v\in V(I)}x_{vI}=d(v)$.
\end{proof}

We remark that that Lemma~\ref{lb6} shows that almost all tours computed by Path-Alg deliver exactly $k$ of demand, whereas, by Lemma~\ref{lb3}, almost all tours computed by Tree-Alg deliver only $k/2$ of demand in the worst case.

\subsection{The Analysis}
Based on a straightforward analysis, we may obtain $2w(\P)+\phi(\P)\leq 2\delta\cdot\psi^*+2\delta\cdot\phi^*$, where $\delta$ is the approximation ratio of the algorithm used for metric TSP. However, it cannot lead to a better-than-4.169-approximation ratio. In this subsection, we will prove a better bound based on a tighter analysis, and obtain a $4.091$-approximation ratio.

Let $C^*$ be a minimum cost Hamiltonian cycle in graph $H$. 

\begin{lemma}\label{lb8}
It holds that $c(C^*)\leq \psi^*+2\theta\cdot\phi^*$.
\end{lemma}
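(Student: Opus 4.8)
The plan is to construct, from an optimal CLR solution, a single Hamiltonian cycle in $H$ whose $c$-cost is at most $\psi^*+2\theta\cdot\phi^*$; since $C^*$ is a minimum cost Hamiltonian cycle in $H$, this immediately yields $c(C^*)\leq\psi^*+2\theta\cdot\phi^*$. I would fix an optimal solution $(O,\I)$ of CLR, for which $\psi^*=w(\I)=\sum_{I\in\I}w(I)$ and $\phi^*=\phi(O)$. Recall that each tour is a simple cycle through exactly one depot, and that $H$ is obtained from $G'$ by contracting all depots into the super-depot $r$ and taking a metric closure, so in particular $c$ satisfies the triangle inequality and shortcutting in $H$ never increases cost.

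First, for each opened depot $u\in O$, let $\I_u\subseteq\I$ denote the set of tours whose unique depot is $u$. I would concatenate the tours in $\I_u$ at $u$ into one closed walk rooted at $u$, then shortcut it into a single simple cycle $C_u$ through $u$ that still visits every customer served by $\I_u$. Because $w$ is a metric on $G$, shortcutting does not increase cost, so $w(C_u)\leq\sum_{I\in\I_u}w(I)$, and crucially $C_u$ has exactly two edges incident to $u$. Mapping every depot to $r$ turns each $C_u$ into a cycle through $r$ in $H$, and the union $\bigcup_{u\in O}C_u$ is a connected (all cycles share $r$) multigraph on $V\cup\{r\}$ in which every vertex has even degree, hence an Eulerian multigraph spanning all of $V\cup\{r\}$. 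Traversing an Eulerian circuit and shortcutting repeated occurrences then yields a Hamiltonian cycle of $H$ of no greater $c$-cost.

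It remains to bound the $c$-cost of $\bigcup_{u\in O}C_u$. A customer-customer edge of $C_u$ has $c$-cost at most its $w$-cost, while each of the two edges of $C_u$ incident to $u$, say $uv$, becomes an edge $rv$ with $c(r,v)=\min_{u'\in U}w'(u',v)\leq w'(u,v)=w(u,v)+\theta\cdot\phi(u)$. Hence the $c$-cost of $C_u$ is at most $w(C_u)+2\theta\cdot\phi(u)\leq\sum_{I\in\I_u}w(I)+2\theta\cdot\phi(u)$, and adding over all $u\in O$ gives a total of at most $\sum_{I\in\I}w(I)+2\theta\cdot\phi(O)=\psi^*+2\theta\cdot\phi^*$, completing the plan. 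The main obstacle is exactly this opening-cost accounting: mapping each tour separately to a closed walk through $r$ would charge $2\theta\cdot\phi(u)$ once \emph{per tour} at $u$, yielding only the weaker bound $\psi^*+2\theta\sum_{u\in O}|\I_u|\cdot\phi(u)$. Merging all tours at a depot into one cycle before contracting reduces the number of $r$-incident edges to two per opened depot, which is what makes each $\phi(u)$ appear with coefficient $2\theta$ rather than $2\theta|\I_u|$; checking that this merge preserves coverage of all customers and the inequality $w(C_u)\leq\sum_{I\in\I_u}w(I)$ is the step requiring the most care.
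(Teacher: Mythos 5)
Your proposal is correct and follows essentially the same route as the paper: merge the tours at each opened depot $o$ into a single cycle $C_o$ with only two depot-incident edges (so $w'(C_o)\leq w(\I_o)+2\theta\cdot\phi(o)$), observe that $\{C_o\}_{o\in O}$ becomes an Eulerian graph through $r$ in $H$, and shortcut to a Hamiltonian cycle of cost at most $\sum_o w'(C_o)\leq\psi^*+2\theta\cdot\phi^*$. Your explicit remark about why merging is needed to avoid charging $\phi(o)$ once per tour is exactly the point the paper's one-line "by shortcutting all tours in $\I_o$" implicitly relies on.
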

\begin{proof}
Consider an optimal solution $(O,\I)$ of CLR. For each depot $o\in O$, there is a set of tours $\I_o\subseteq\I$ with each in it containing $o$. By shortcutting all tours in $\I_o$, we obtain a cycle $C_o=ov_1\dots v_io$ such that $w'(C_o)\leq w(\I_o)+2\theta\cdot\phi(o)$. Recall that the graph $H$ is obtained by contracting all depots in $U$ as a super-depot $r$ and then taking a metric closure. Thus, $\{C_o\}_{o\in O}$ corresponds to an Eulerian graph in $H$. Therefore, by shortcutting the corresponding Eulerian graph, we obtain a Hamiltonian cycle in $H$ with a cost of at most $\sum_{o\in O}w'(C_o)$. Since $C^*$ is the minimum cost Hamiltonian cycle in $H$, we can get that 
$
c(C^*)\leq \sum_{o\in O}w'(C_o)\leq w(\I)+2\theta\cdot\phi(O)=\psi^*+2\theta\cdot\phi^*,   
$
which finishes the proof.
\end{proof}

\begin{lemma}\label{lb9}
Given a $\delta$-approximation algorithm for metric TSP, the computed constrained spanning path-packing $\P$ in Lemma~\ref{pathpack} satisfies that $2w(\P)+\phi(\P)\leq 2\delta\cdot\psi^*+\delta\cdot\phi^*$ under $\theta=\frac{1}{4}$.
\end{lemma}
\begin{proof}
Given a $\delta$-approximate Hamiltonian cycle in $H$, the algorithm in Lemma~\ref{pathpack} simply computes a constrained spanning cycle-packing $\C$ in $G'$ by shortcutting the corresponding subgraph in $G'$. Hence, we have $w'(\C)\leq \delta\cdot c(C^*)$. Let $\C=\C_1\cup\C_2$, where $\C_1$ and $\C_2$ denote the set of cycles and paths in $\C$, respectively. 

For each cycle $uv_1\dots v_iu\in\C_1$ (resp., path $uv_1\dots v_iu'\in\C_2$), the algorithm deletes the edge with a smaller cost from $uv_1$ and $uv_i$ (resp., incident to the depot with a smaller opening cost from $uv_1$ and $u'v_i$). Therefore, the constrained spanning path-packing $\P$ satisfies that $w'(\P)\leq w'(\C)-\theta\cdot\phi(\C_1)-\frac{1}{2}\theta\cdot\phi(\C_2)$ and $\phi(\P)\leq \phi(\C_1)+\frac{1}{2}\phi(\C_2)$. Note that $w'(\P)=w(\P)+\theta\cdot\phi(\P)$ since each path in $\P$ contains only one edge incident to the depot. Moreover, under $\theta=\frac{1}{4}$, we also have $1-2\theta=2\theta>0$. Hence, we have 
\begin{align*}
2w(\P)+\phi(\P)&= 2w'(\P)+(1-2\theta)\cdot\phi(\P)\\
&\leq 2w'(\P)+(1-2\theta)\cdot\lrA{\phi(\C_1)+\frac{1}{2}\phi(\C_2)}\\
&=2w'(\P)+2\theta\cdot\lrA{\phi(\C_1)+\frac{1}{2}\phi(\C_2)}\\
&\leq 2w'(\C)\leq 2\delta\cdot c(C^*).
\end{align*}
So, we get $2w(\P)+\phi(\P)\leq 2\delta\cdot\psi^*+\delta\cdot\phi^*$ by Lemma~\ref{lb8}.
\end{proof}

Using the well-known Christofides-Serdyukov algorithm~\cite{christofides1976worst,serdyukov1978some} for metric TSP, we have $\delta=\frac{3}{2}$, and then we have $2w(\P)+\phi(\P)\leq 3\psi^*+\frac{3}{2}\phi^*$ by Lemma~\ref{lb9}. Note that this result may only lead to a 4.143-approximation algorithm for splittable CLR. We can obtain a further improvement if using the property of their algorithm. Let $T^*$ be a minimum spanning tree in $H$. We have the following lemma.

\begin{lemma}\label{lb7}
It holds that $c(T^*)\leq\psi^*+\theta\cdot\phi^*$.
\end{lemma}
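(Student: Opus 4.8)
The plan is to mirror the argument of Lemma~\ref{lb8}, but to build a \emph{spanning tree} of $H$ out of the optimal CLR solution rather than a Hamiltonian cycle. The key accounting observation is that a path (or tree) rooted at a depot uses only \emph{one} edge incident to that depot, whereas the cycle $C_o$ in Lemma~\ref{lb8} used two; since every depot-incident edge in $G'$ carries an extra additive term of $\theta\cdot\phi(o)$, this halves the opening-cost overhead from $2\theta\cdot\phi^*$ down to $\theta\cdot\phi^*$. In effect, Lemma~\ref{lb7} is the spanning-tree analogue of Lemma~\ref{lb8}, exactly as the constrained spanning forest of Lemma~\ref{lb2} is the analogue of the tour structure.

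Concretely, I would start from an optimal solution $(O,\I)$ of CLR and, for each opened depot $o\in O$, let $\I_o\subseteq\I$ be the tours through $o$. Shortcutting the tours in $\I_o$ yields a single cycle through $o$, and deleting one edge incident to $o$ turns it into a path $P_o=ov_1\dots v_i$ with $w(P_o)\leq w(\I_o)$. Because $P_o$ contains exactly one depot-incident edge, its cost in $G'$ satisfies $w'(P_o)=w(P_o)+\theta\cdot\phi(o)\leq w(\I_o)+\theta\cdot\phi(o)$. Summing over $o\in O$ then gives $\sum_{o\in O}w'(P_o)\leq w(\I)+\theta\cdot\phi(O)=\psi^*+\theta\cdot\phi^*$.

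The final step transfers this bound to $H$. Each $P_o=ov_1\dots v_i$ corresponds in $H$ to the path $rv_1\dots v_i$, whose first edge has cost $c(r,v_1)=\min_{u\in U}w'(u,v_1)\leq w'(o,v_1)$ and whose remaining edges $v_jv_{j+1}$ have cost $c(v_j,v_{j+1})\leq w'(v_j,v_{j+1})$; hence its $H$-cost is at most $w'(P_o)$. All these paths pass through $r$ and together cover every customer, so their union is a connected spanning subgraph of $H$ on $V\cup\{r\}$. Since $T^*$ is a minimum spanning tree of $H$, its cost is at most the total cost of any connected spanning subgraph, yielding $c(T^*)\leq\sum_{o\in O}w'(P_o)\leq\psi^*+\theta\cdot\phi^*$.

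I anticipate the only real subtlety to be that the paths $\{P_o\}_{o\in O}$ need not be vertex-disjoint — in a splittable solution a customer may be served from several depots — so they do not literally form a constrained spanning forest. I would handle this not by explicitly disjointifying them (though further shortcutting would also work), but simply by observing that their union is already a connected spanning subgraph of $H$, for which the minimum-spanning-tree cost is an immediate lower bound; this sidesteps the disjointness bookkeeping entirely and keeps the proof as short as that of Lemma~\ref{lb8}.
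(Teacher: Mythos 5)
Your proposal is correct and follows essentially the same route as the paper: build a depot-rooted path $P_o$ from each $\I_o$ by shortcutting and deleting one depot-incident edge, bound $\sum_o w'(P_o)$ by $\psi^*+\theta\cdot\phi^*$, and observe that the corresponding subgraph of $H$ spans all vertices, so the minimum spanning tree costs no more. Your explicit remark that the paths need not be vertex-disjoint and that a connected spanning subgraph suffices is just a careful spelling-out of what the paper leaves implicit.
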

\begin{proof}
Consider an optimal solution $(O,\I)$ of CLR. For each depot $o\in O$, there is a set of tours $\I_o\subseteq\I$ with each in it containing $o$. By shortcutting all tours in $\I_o$ and deleting an edge incident to $o$, we obtain a path $P_o=ov_1\dots v_i$ such that $w'(P_o)\leq w(\I_o)+\theta\cdot\phi(o)$. Note that $\{P_o\}_{o\in O}$ corresponds to a graph spanning all vertices in $H$ with a cost of at most $\sum_{o\in O}w'(P_o)\leq w(\I)+\theta\cdot\phi(O)=\psi^*+\theta\cdot\phi^*$. Hence, we have $c(T^*)\leq\psi^*+\theta\cdot\phi^*$.
\end{proof}

\begin{lemma}\label{lb10}
Given the Christofides-Serdyukov algorithm for metric TSP, the computed constrained spanning path-packing $\P$ satisfies that $2w(\P)+\phi(\P)\leq 3\psi^*+\phi^*$ if $\theta=\frac{1}{4}$.
\end{lemma}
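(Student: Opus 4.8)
The plan is to exploit the internal structure of the Christofides--Serdyukov algorithm rather than merely invoking its approximation ratio $\delta=\frac{3}{2}$ as was done in Lemma~\ref{lb9}. Recall that on the metric graph $H$ the algorithm first computes a minimum spanning tree $T^*$, then a minimum weight perfect matching $M$ on the odd-degree vertices of $T^*$, and finally shortcuts an Eulerian circuit of $T^*\cup M$ to produce the Hamiltonian cycle $C$ that the path-packing algorithm of Lemma~\ref{pathpack} actually uses. Thus $c(C)\leq c(T^*)+c(M)$, and the standard parity argument gives $c(M)\leq\frac{1}{2}c(C^*)$: since the odd-degree vertices of $T^*$ form an even set, shortcutting $C^*$ onto them yields (by the metricity of $H$) a cycle of cost at most $c(C^*)$ that decomposes into two perfect matchings, the cheaper of which costs at most $\frac{1}{2}c(C^*)$.

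First I would combine this with the two refined bounds already established. By Lemma~\ref{lb7} we have $c(T^*)\leq\psi^*+\theta\cdot\phi^*$, and by Lemma~\ref{lb8} we have $c(C^*)\leq\psi^*+2\theta\cdot\phi^*$. Substituting,
\begin{align*}
c(C)\leq c(T^*)+\tfrac{1}{2}c(C^*)\leq\lrA{\psi^*+\theta\cdot\phi^*}+\tfrac{1}{2}\lrA{\psi^*+2\theta\cdot\phi^*}=\tfrac{3}{2}\psi^*+2\theta\cdot\phi^*.
\end{align*}
The essential gain over Lemma~\ref{lb9} is visible here: the naive estimate $c(C)\leq\frac{3}{2}c(C^*)$ would contribute $3\theta\cdot\phi^*$, whereas treating $T^*$ and $C^*$ separately reduces this to $2\theta\cdot\phi^*$.

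Next I would re-run the shortcutting argument from the proof of Lemma~\ref{lb9} essentially verbatim, but feed in the sharper estimate. The cycle-packing $\C$ in $G'$ is obtained from $C$ by shortcutting, so $w'(\C)\leq c(C)$; then, exactly as in Lemma~\ref{lb9}, deleting one depot-incident edge from each cycle and path of $\C$ and using $\theta=\frac{1}{4}$ (so that $1-2\theta=2\theta$) yields $2w(\P)+\phi(\P)\leq 2w'(\C)$. Chaining these estimates,
\begin{align*}
2w(\P)+\phi(\P)\leq 2w'(\C)\leq 2c(C)\leq 3\psi^*+4\theta\cdot\phi^*,
\end{align*}
and setting $\theta=\frac{1}{4}$ gives $4\theta=1$, whence $2w(\P)+\phi(\P)\leq 3\psi^*+\phi^*$, as claimed.

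The step requiring the most care is the matching bound $c(M)\leq\frac{1}{2}c(C^*)$ together with the inequality $w'(\C)\leq c(C)$: one must check that the odd-vertex shortcut of $C^*$ stays inside the metric of $H$ (so that the decomposition into two perfect matchings is legitimate) and that mapping $C$ back into $G'$ and shortcutting does not raise the $w'$-cost above $c(C)$. Both facts rest only on the triangle inequality for $c$ and $w'$, which holds since these are metric closures, so the remaining work is a direct reuse of the bookkeeping already carried out in Lemmas~\ref{lb7}, \ref{lb8}, and \ref{lb9}.
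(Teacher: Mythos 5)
Your proof is correct and follows essentially the same route as the paper's: both exploit the internal structure of the Christofides--Serdyukov algorithm by bounding $c(T^*)$ via Lemma~\ref{lb7} and $c(M)\leq\frac{1}{2}c(C^*)$ via Lemma~\ref{lb8}, obtaining $w'(\C)\leq\frac{3}{2}\psi^*+2\theta\cdot\phi^*$ and then reusing the edge-deletion bookkeeping from Lemma~\ref{lb9} with $\theta=\frac{1}{4}$. The only difference is cosmetic: you spell out the parity argument for the matching bound, which the paper simply cites as well known.
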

\begin{proof}
The Christofides-Serdyukov algorithm first computes a minimum spanning tree $T^*$ in $H$ and then a minimum cost matching $M$ on the set of vertices of $T^*$ with odd-degrees. It is well-known that $c(M)\leq \frac{1}{2}c(C^*)$. So, the constrained spanning cycle-packing $\C$ in $G'$ obtained by the algorithm in Lemma~\ref{pathpack} satisfies that $w'(\C)\leq c(T^*)+c(M)\leq c(T^*)+\frac{1}{2}c(C^*)\leq \psi^*+\theta\cdot\phi^*+\frac{1}{2}(\psi^*+2\theta\cdot\phi^*)=\frac{3}{2}\psi^*+2\theta\cdot\phi^*$ by Lemmas~\ref{lb8} and~\ref{lb10}. Under $\theta=\frac{1}{4}$, we have $2w(\P)+\phi(\P)\leq 2w'(\C)\leq 3\psi^*+\phi^*$ by the proof of Lemma~\ref{lb9}.
\end{proof}

\begin{theorem}\label{theorem2}
For splittable CLR, Path-Alg is a polynomial-time $4.091$-approximation algorithm.
\end{theorem}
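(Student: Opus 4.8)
The plan is to combine the path-splitting cost bound (Lemma~\ref{lb6}), the structural bound on the path-packing (Lemma~\ref{lb10}), the UFL lower bounds (Lemmas~\ref{lb4} and~\ref{lb5}), and the improved CLR lower bound (Lemma~\ref{lb1}), exactly mirroring the bookkeeping in the proof of Theorem~\ref{theorem1} but with the sharper constants that Path-Alg affords. First I would write down the cost of the returned solution $(O,\I)$ as $w(\I)+\phi(O)$, where $O=O_1\cup O_2$. Using Lemma~\ref{lb6} I bound the routing cost by $2w(\P)+\sum_{v\in V}(2/k)d(v)\cdot\min_{u\in O}w(v,u)$, and the opening cost by $\phi(O_1)+\phi(O_2)=\phi(O_1)+\phi(\P)$. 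The key improvement over Tree-Alg is that the connection term now carries a factor $2/k$ rather than $4/k$, i.e. $\sum_{v\in V}(2/k)d(v)\cdot\min_{u\in O}w(v,u)=\sum_{v\in V}d(v)\cdot\min_{u\in O}\widetilde{w}(v,u)\le \sum_{v\in V}d(v)\cdot\min_{u\in O_1}\widetilde{w}(v,u)$, which by Lemma~\ref{lb4} is at most $(1+2e^{-\gamma})\cdot\psi_{LP}$ (note: no leading factor of $2$, unlike in Theorem~\ref{theorem1}).

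Next I would substitute the two structural bounds into the part of the cost coming from the path-packing. By Lemma~\ref{lb10}, under $\theta=\tfrac14$ and using the Christofides--Serdyukov algorithm, $2w(\P)+\phi(\P)\le 3\psi^*+\phi^*$; this replaces the role that Lemma~\ref{lb2} played in the tree case. For the opening cost of $O_1$ I would again rewrite $\phi(O_1)=(1/\alpha)\cdot\widetilde{\phi}(O_1)\le(1/\alpha)\cdot\gamma\cdot\phi_{LP}$ via Lemma~\ref{lb4}. Collecting terms, the total cost is at most
\begin{align*}
3\psi^*+\phi^*+(1+2e^{-\gamma})\cdot\psi_{LP}+(1/\alpha)\cdot\gamma\cdot\phi_{LP}.
\end{align*}
Defining $g_\gamma(\alpha)\coloneqq\max\{\,1+2e^{-\gamma},\,(1/\alpha)\cdot\gamma\,\}$ and applying Lemma~\ref{lb5} (so that $\psi_{LP}+\phi_{LP}\le\psi^*+\alpha\cdot\phi^*$), the cost is at most $3\psi^*+\phi^*+g_\gamma(\alpha)\cdot(\psi^*+\alpha\cdot\phi^*)$. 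Dividing by $\OPT=\psi^*+\phi^*$ yields an approximation ratio of at most $\max\{\,3+g_\gamma(\alpha),\,1+\alpha\cdot g_\gamma(\alpha)\,\}$, and I would close the argument by exhibiting concrete values of $\alpha$ and $\gamma$ (optimizing this max) that drive the bound down to $4.091$.

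The main obstacle I anticipate is the parameter optimization and, more subtly, making sure the two lower bounds are used \emph{consistently}. The coefficient on $\psi^*$ in the routing part is now $3$ (from the path-packing) rather than $2$ (as in the forest of Lemma~\ref{lb2}), so the naive trade-off is worse on that side; the whole point of Path-Alg is that halving the connection factor ($4/k\to 2/k$) and the sharper $3\psi^*+\phi^*$ bound of Lemma~\ref{lb10} more than compensate. I would need to verify that the chosen $(\alpha,\gamma)$ simultaneously satisfies $\gamma\ge 1.678$ (the validity range of the Byrka--Aardal bifactor algorithm) and actually equalizes or balances the two branches of the maximum near $4.091$; the delicate point is that $g_\gamma(\alpha)$ is itself a maximum, so the outer maximum is really over three competing linear expressions in the constants, and one must check that the claimed optimum is attained rather than merely plausible. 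The rest is routine substitution, and polynomial running time follows since every ingredient—the bifactor UFL algorithm, the TSP $3/2$-approximation, and the path-splitting procedure—runs in polynomial time.
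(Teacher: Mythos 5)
Your proposal is correct and follows essentially the same route as the paper's proof of Theorem~\ref{theorem2}: bound the routing cost via Lemma~\ref{lb6}, absorb the $2/k$ connection term and $\phi(O_1)$ via Lemma~\ref{lb4}, use the $2w(\P)+\phi(\P)\le 3\psi^*+\phi^*$ bound of Lemma~\ref{lb10}, apply Lemma~\ref{lb5}, and balance $\max\{3+f_\gamma(\alpha),\,1+\alpha f_\gamma(\alpha)\}$ with $f_\gamma(\alpha)=\max\{1+2e^{-\gamma},\gamma/\alpha\}$. The only missing detail is the explicit choice $\alpha=2.8332$, $\gamma=3.0909$ (which satisfies $\gamma\ge 1.678$, equalizes both branches at $f_\gamma(\alpha)\approx 1.09096$, and yields the bound $4.091$), but your plan correctly identifies exactly this optimization.
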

\begin{proof}
By Algorithm~\ref{algo:splittable UFL}, it returns a solution $(O,\I)$ such that
\begin{align*}
w(\I)+\phi(O)&\leq 2w(\P)+\sum_{v\in V}(2/k)d(v)\cdot\min_{u\in O}w(v,u)+\phi(O_1)+\phi(O_2)\\
&\leq 2w(\P)+\sum_{v\in V}d(v)\cdot\min_{u\in O_1}\widetilde{w}(v,u)+(1/\alpha)\cdot\widetilde{\phi}(O_1)+\phi(\P)\\
&\leq 3\psi^*+\phi^*+(1+2e^{-\gamma})\cdot\psi_{LP}+(1/\alpha)\cdot\gamma\cdot\phi_{LP},
\end{align*}
where the first inequality follows from $w(\I)\leq 2w(\P)+\sum_{v\in V}(2/k)d(v)\cdot\min_{u\in O}w(v,u)$ by Lemma~\ref{lb6} and $\phi(O)\leq\phi(O_1)+\phi(O_2)$, the second inequality follows from $\min_{u\in O}w(v,u)\leq \min_{u\in O_1}w(v,u)$, $\widetilde{w}=(2/k)w$, $\widetilde{\phi}=\alpha\cdot\phi$, and $\phi(O_2)=\phi(\P)$, and the last inequality follows from $\OPT=\psi^*+\phi^*$ and Lemmas~\ref{lb4} and~\ref{lb9}.

Let $f_\gamma(\alpha)\coloneqq \max\{(1+2e^{-\gamma}),(1/\alpha)\cdot\gamma\}$. 
Since $\OPT=\psi^*+\phi^*$, the approximation ratio is at most
\begin{align*}
\frac{3\psi^*+\phi^*+f_\gamma(\alpha)\cdot(\psi_{LP}+\phi_{LP})}{\psi^*+\phi^*}&\leq \frac{3\psi^*+\phi^*+f_\gamma(\alpha)\cdot(\psi^*+\alpha\cdot\phi^*)}{\psi^*+\phi^*}\\
&\leq\max\{3+f_\gamma(\alpha), 1+\alpha\cdot f_\gamma(\alpha)\}, 
\end{align*}
where the first inequality follows from Lemma~\ref{lb5}.

Setting $\alpha=2.8332$ and $\gamma=3.0909$, we get $f_\gamma(\alpha)\leq 1.09096$ and $\max\{2+f_\gamma(\alpha), 1+\alpha\cdot f_\gamma(\alpha)\}\leq 4.091$. Hence, the approximation ratio of Algorithm~\ref{algo:splittable UFL} is at most $4.091$.

Moreover, it is easy to see that Path-Alg also runs in polynomial time.
\end{proof}

Recall that the main advantage of Path-Alg is that almost all tours it computes deliver exactly $k$ of demand. This advantage primarily arises from the properties of the splittable case. However, in the unsplittable case, maintaining this advantage becomes challenging, which in turn makes it difficult to achieve a better approximation ratio than Tree-Alg.

\section{Experimental Results}\label{sec6}
We conduct experiments to compare our two algorithms, Tree-Alg and Path-Alg, with the previous approximation algorithm in~\cite{HarksKM13}.
Next, we introduce the benchmark instances of CLR, the implementations of our algorithms, and the results, respectively.

\subsection{Instances}
Harks \emph{et al.}~\cite{HarksKM13} tested their approximation algorithm on a total of 45 CLR benchmark instances, including 36 instances from~\cite{tuzun1999two} and 9 instances from~\cite{barreto2007using}. Note that these instances were primarily considered in the unsplittable case.
They compared their results with the \emph{previous best known solutions} ({pbks}) obtained by heuristic approaches~\cite{prins2007solving,baldacci2009capacitated,barreto2007using,tuzun1999two} by computing the gaps between their results and the {pbks}, where each gap is calculated as $Gap=\frac{Result-pbks}{pbks}$. 
Notably, they also employed the LKH heuristic algorithm~\cite{helsgaun2000effective} for metric TSP to optimize the tours derived from their approximation algorithm.
Although some results of these benchmark instances have been slightly improved further~\cite{baldacci2011exact, contardo2014exact}, for the sake of comparison, we test our algorithms on these instances and still compute the gaps between our results and the {pbks}.
The details of the 45 tested instances, including the number of depots $m$, the number of customers $n$, the vehicle capacity $k$, the pbks, the gaps of the previous approximation algorithm (pGap), and the gaps of the previous approximation algorithm using the LKH algorithm (pGap+LKH) are shown in Table~\ref{insinfo}. 

\subsection{Implementations}
We present the detailed implementations of our algorithms.

On one hand, given a sub-tree, instead of finding tours by the method of doubling and shortcutting, we obtained a tour by finding a minimum cost matching on the odd-degree vertices contained in the sub-tree and then shortcutting. 
This was motivated by its ability to guarantee a $1.5$-approximation for metric TSP~\cite{christofides1976worst,serdyukov1978some}, whereas the doubling-and-shortcutting method may only achieve a 2-approximation~\cite{williamson2011design}. 
Note that in our Path-Alg we used this method to implement the 1.5-approximation algorithm of metric TSP as well.
The above method is based on the 1.5-approximation algorithm for metric TSP.
Moreover, since the LKH heuristic algorithm~\cite{helsgaun2000effective} can compute very good solutions for metric TSP on small instances very quickly, we alternatively used the LKH algorithm to find tours for Tree-Alg and Path-Alg, as well as the Hamiltonian cycle for Path-Alg. The results obtained by these two approaches will both be reported.

\begin{table}[H]
\centering
\resizebox{0.5\textheight}{!}{
\begin{tabular}{|l|c|c|c|c|c|c|c|c|c|c|c|c|}
\hline 
{\textbf{Names}} & {\boldmath \textbf{$m$}} & {\boldmath \textbf{$n$}} & {\boldmath \textbf{$k$}} & {\textbf{pbks}} & {\textbf{pGap}} & {\textbf{pGap+LKH}}\\
\hline
Chr69-100$\times$10 & 10 & 100 & 200 & 842.90 & 0.283 & 0.108\\
Chr69-50$\times$5 & 5 & 50 & 160 & 565.60 & 0.220 & 0.079\\
Chr69-75$\times$10 & 10 & 75 & 160 & 861.60 & 0.177&0.104\\
Gas67-22$\times$5 & 5 & 22 & 4500 & 585.11 & 0.244&0.201\\
Gas67-29$\times$5 & 5 & 29 & 4500 & 512.10 & 0.279&0.165\\
Gas67-32$\times$5 & 5 & 32 & 8000 & 562.20 & 0.245&0.179\\
Gas67-32$\times$5-2 & 5 & 32 & 11000 & 504.30 & 0.205&0.123\\
Gas67-36$\times$5 & 5 & 36 & 250 & 460.40 & 0.448&0.094\\
Min92-27$\times$5 & 5 & 27 & 2500 & 3062.00 & 0.181&0.115\\
P111112 & 10 & 100 & 150 & 1468.40 & 0.207&0.079\\
P111122 & 20 & 100 & 150 & 1449.20 & 0.235&0.117\\
P111212 & 10 & 100 & 150 & 1396.46 & 0.133&0.043\\
P111222 & 20 & 100 & 150 & 1432.29 & 0.342&0.246\\
P112112 & 10 & 100 & 150 & 1167.53 & 0.164&0.076\\
P112122 & 20 & 100 & 150 & 1102.70 & 0.133&0.095\\
P112212 & 10 & 100 & 150 & 793.97 & 0.086&0.041\\
P112222 & 20 & 100 & 150 & 728.30 & 0.119&0.070\\
P113112 & 10 & 100 & 150 & 1238.49 & 0.183&0.090\\
P113122 & 20 & 100 & 150 & 1246.34 & 0.201&0.131\\
P113212 & 10 & 100 & 150 & 902.38 & 0.140&0.082\\
P113222 & 20 & 100 & 150 & 1021.31 & 0.166&0.126\\
P121112 & 10 & 200 & 150 & 2281.78 & 0.217&0.142\\
P121122 & 20 & 200 & 150 & 2185.55 & 0.139&0.110\\
P121212 & 10 & 200 & 150 & 2234.78 & 0.191&0.067\\
P121222 & 20 & 200 & 150 & 2259.52 & 0.225&0.102\\
P122112 & 10 & 200 & 150 & 2101.90 & 0.145&0.081\\
P122122 & 20 & 200 & 150 & 1709.56 & 0.179&0.230\\
P122212 & 10 & 200 & 150 & 1467.54 & 0.107&0.050\\
P122222 & 20 & 200 & 150 & 1084.78 & 0.119&0.123\\
P123112 & 10 & 200 & 150 & 1973.28 & 0.170&0.098\\
P123122 & 20 & 200 & 150 & 1957.23 & 0.126&0.075\\
P123212 & 10 & 200 & 150 & 1771.06 & 0.183&0.068\\
P123222 & 20 & 200 & 150 & 1393.62 & 0.182&0.029\\
P131112 & 10 & 150 & 150 & 1866.75 & 0.253&0.145\\
P131122 & 20 & 150 & 150 & 1841.86 & 0.230&0.050\\
P131212 & 10 & 150 & 150 & 1981.37 & 0.153&0.105\\
P131222 & 20 & 150 & 150 & 1809.25 & 0.206&0.122\\
P132112 & 10 & 150 & 150 & 1448.27 & 0.163&0.088\\
P132122 & 20 & 150 & 150 & 1444.25 & 0.301&0.125\\
P132212 & 10 & 150 & 150 & 1206.73 & 0.101&0.050\\
P132222 & 20 & 150 & 150 & 931.94 & 0.170&0.049\\
P133112 & 10 & 150 & 150 & 1699.92 & 0.155&0.081\\
P133122 & 20 & 150 & 150 & 1401.82 & 0.127&0.050\\
P133212 & 10 & 150 & 150 & 1199.51 & 0.128&0.146\\
P133222 & 20 & 150 & 150 & 1152.86 & 0.081&0.134\\
\hline
\textbf{Average Gap} &&&&&0.188&0.100\\
\hline
\end{tabular}}
\caption{The information of the 45 CLR benchmark instances, where the last line calculates the average gaps.}
\label{insinfo}
\end{table}

On the other hand, as in~\cite{HarksKM13}, we used the practical 1.861-approximation algorithm~\cite{DBLP:journals/jacm/JainMMSV03} of UFL to open a set of depots $O_1$ instead of using the bifactor approximation algorithm~\cite{DBLP:journals/siamcomp/ByrkaA10} since the latter algorithm involves solving an LP, which is not practical. So, there was no need to consider the setting of $\gamma$. Moreover, after opening a set of depots using the greedy algorithm, suggested in~\cite{HarksKM13}, we regarded their opening cost as zero in the following. This does not impact the approximation ratio. For $\alpha$, we tested different values. This choice was made because for a range of values of $\alpha$ the implementations using the 1.5-approximation algorithm for metric TSP can always guarantee a good approximation ratio. See the following lemma. Harks \emph{et al.}~\cite{HarksKM13} only considered $\alpha=1$ and showed the implementation had a ratio of 5.722.

In the following, we use ``Tree/Path-APP'' to denote that the implementation of Tree/Path-Alg is based on the 1.5-approximation algorithm for metric TSP, and use ``Tree/Path-LKH" to denote that the implementation of Tree/Path-Alg is based on the LKH algorithm for metric TSP.

\begin{lemma}\label{fact}
For any $0.5\leq \alpha\leq 1.26$, Tree-APP has an approximation ratio of 5.722; For any $1\leq \alpha\leq 2.07$, Path-APP has an approximation ratio of 4.861; Moreover, both implementations have a running-time bound of $O(n^3+nm\log nm)$.
\end{lemma}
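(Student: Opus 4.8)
The plan is to re-run the proofs of Theorems~\ref{theorem1} and~\ref{theorem2} with two changes. First, the bifactor UFL subroutine is replaced by Jain \emph{et al.}'s single-factor $1.861$-approximation, so that Lemma~\ref{lb4} is superseded by the combined guarantee $\sum_{v\in V}d(v)\cdot\min_{u\in O_1}\widetilde{w}(v,u)+\widetilde{\phi}(O_1)\leq 1.861\cdot\OPT'$. Second, the opening costs of the depots in $O_1$ are set to zero after they are opened; since this can only lower the cost incurred by the remaining steps, it does not affect the provable ratio. The lower bound $\OPT'\leq\psi^*+\alpha\cdot\phi^*$ from Lemma~\ref{lb5} remains the only way the UFL value enters the estimate.

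For Tree-APP I would begin from the first displayed inequality in the proof of Theorem~\ref{theorem1} and handle its three groups of terms. By Lemma~\ref{lb2} I keep $2w(\T)+\phi(O_2)=2w(\T)+\phi(\T)\leq 2\psi^*+\phi^*$. Using $\widetilde{w}=(2/k)w$ and $\widetilde{\phi}=\alpha\cdot\phi$, the connection term $\sum_{v\in V}(4/k)d(v)\cdot\min_{u\in O}w(v,u)$ is bounded by $2\cdot 1.861\cdot\OPT'-2\alpha\cdot\phi(O_1)=3.722\cdot\OPT'-2\alpha\cdot\phi(O_1)$. Adding back the genuine opening cost $\phi(O_1)$ leaves the coefficient $(1-2\alpha)$ on $\phi(O_1)$, which is nonpositive exactly when $\alpha\geq 0.5$, so this contribution may be dropped. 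Substituting $\OPT'\leq\psi^*+\alpha\cdot\phi^*$ then gives a total cost of at most $5.722\cdot\psi^*+(1+3.722\cdot\alpha)\cdot\phi^*$, i.e.\ a ratio of $\max\{5.722,\,1+3.722\cdot\alpha\}$ over $\OPT=\psi^*+\phi^*$; this equals $5.722$ precisely when $1+3.722\cdot\alpha\leq 5.722$, that is, $\alpha\leq 1.26$.

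For Path-APP the argument is structurally the same but rests on Lemma~\ref{lb6}, so the connection term carries the factor $(2/k)$ rather than $(4/k)$, and on Lemma~\ref{lb10}, which gives $2w(\P)+\phi(\P)\leq 3\psi^*+\phi^*$ under the Christofides--Serdyukov algorithm. Here the connection term is bounded by $1.861\cdot\OPT'-\alpha\cdot\phi(O_1)$, so adding back $\phi(O_1)$ leaves the coefficient $(1-\alpha)$, nonpositive exactly when $\alpha\geq 1$. With $\OPT'\leq\psi^*+\alpha\cdot\phi^*$ the total cost is at most $4.861\cdot\psi^*+(1+1.861\cdot\alpha)\cdot\phi^*$, giving a ratio of $\max\{4.861,\,1+1.861\cdot\alpha\}$, which equals $4.861$ whenever $1+1.861\cdot\alpha\leq 4.861$, i.e.\ $\alpha\leq 2.07$. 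I expect the main obstacle in both cases to be the bookkeeping of $\phi(O_1)$: because the single-factor guarantee charges connection and opening cost together, the opening cost of $O_1$ must be paid back explicitly, and it is the sign of the resulting coefficient that forces the lower endpoints $\alpha\geq 0.5$ and $\alpha\geq 1$.

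Finally, for the running time I would charge each subroutine separately. The greedy UFL algorithm of Jain \emph{et al.} runs in $O(nm\log nm)$; the minimum spanning tree underlying the constrained spanning forest (respectively path-packing) costs $O(n^2)$ by Prim's algorithm; and the dominant term $O(n^3)$ stems from the minimum-cost perfect matchings used to build the tours and, for Path-APP, the matching inside the Christofides--Serdyukov Hamiltonian cycle. Since the tree-/path-splitting procedures run in time linear in the instance size, summing the contributions yields the claimed $O(n^3+nm\log nm)$ bound.
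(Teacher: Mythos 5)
Your argument is correct and is essentially the paper's own proof in different clothing: your ``charge the combined guarantee $\sum_{v}d(v)\min_{u\in O_1}\widetilde{w}(v,u)+\widetilde{\phi}(O_1)\leq\rho\cdot\OPT'$, then pay back $\phi(O_1)$ and check the sign of the leftover coefficient'' is algebraically identical to the paper's bound via $\max\{2,1/\alpha\}$ (resp.\ $\max\{1,1/\alpha\}$) times the combined UFL cost, and it produces the same thresholds $\alpha\geq 0.5$, $\alpha\leq(2\rho+1)/(2\rho)\approx 1.2687$ and $\alpha\geq 1$, $\alpha\leq(\rho+2)/\rho\approx 2.0747$. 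The one step you skip is the other defining change of Tree-APP/Path-APP: tours are built by a minimum-cost matching on the odd-degree vertices of each sub-tree/sub-path rather than by doubling, so you must verify that the routing-cost bounds of Lemmas~\ref{lb3} and~\ref{lb6} survive this substitution; the paper does so in one line via $w(M)\leq w(T)$, giving $w(I')\leq 2w(T)$, and without this remark your appeal to those lemmas is not yet justified for the implemented algorithms. The running-time accounting matches the paper's.
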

\begin{proof}
The implementations of our algorithms have undergone two main changes. 

For the first change, it allows for the preservation of the upper bounds in both tree-splitting and path-splitting procedures, i.e., Lemmas~\ref{lb3} and \ref{lb6} still hold.
The reason is shown as follows.

When given a sub-tree $T$ (a sub-path is also considered as a sub-tree) in the procedures, we find a tour $I$ by the doubling-and-shorcutting method, which guarantees $w(I)\leq 2w(T)$. However, in the implementation, we obtain a tour $I'$ by finding a minimum cost matching $M$ on the odd-degree vertices in $U(T)\cup V(T)$ and then shortcutting, which still guarantees $w(I')\leq 2w(T)$ since $w(M)\leq w(T)$~\cite{christofides1976worst,serdyukov1978some}.

Next, we consider the impact of the second change.

Since we use the 1.861-approximation algorithm of UFL to open a set of depots $O_1$, we have 
$
\sum_{v\in V}d(v)\cdot\min_{u\in O_1}\widetilde{w}(v,u)+\widetilde{\phi}(O_1)\leq \rho\cdot\OPT'\leq \rho\cdot\psi^*+\alpha\cdot\rho\cdot\phi^*,
$
where $\rho=1.861$, and the second inequality follows from Lemma~\ref{lb5}.

By a similar proof of Theorem~\ref{theorem1}, we can get that Tree-APP has a weight of at most 
\begin{align*}
&2w(\T)+\sum_{v\in V}2d(v)\cdot\min_{u\in O_1}\widetilde{w}(v,u)+(1/\alpha)\cdot\widetilde{\phi}(O_1)+\phi(\T)\\
&\leq 2\psi^*+\phi^*+\max\{2,1/\alpha\}\lrA{\sum_{v\in V}d(v)\cdot\min_{u\in O_1}\widetilde{w}(v,u)+\widetilde{\phi}(O_1)}\\
&\leq 2\psi^*+\phi^*+\max\{2,1/\alpha\}\lrA{\rho\cdot\psi^*+\alpha\cdot\rho\cdot\phi^*}\quad\quad\quad\quad\quad\quad\\
&\leq 2\psi^*+\phi^*+2\lrA{\rho\cdot\psi^*+\frac{2\rho+1}{2\rho}\cdot\rho\cdot\phi^*}=(2\rho+2)\cdot\OPT,
\end{align*}
where the last inequality follows from $0.5\leq\alpha\leq1.26<\frac{2\rho+1}{2\rho}$ and the last equality from $\OPT=\psi^*+\phi^*$. So, for any $0.5\leq\alpha\leq1.26$, Tree-APP always has an approximation ratio of $2\rho+2=5.722$.

By a similar proof of Theorem~\ref{theorem2}, we can get that Path-APP has a weight of at most 
\begin{align*}
&2w(\P)+\sum_{v\in V}d(v)\cdot\min_{u\in O_1}\widetilde{w}(v,u)+(1/\alpha)\cdot\widetilde{\phi}(O_1)+\phi(\P)\\
&\leq 3\psi^*+\phi^*+\max\{1,1/\alpha\}\lrA{\sum_{v\in V}d(v)\cdot\min_{u\in O_1}\widetilde{w}(v,u)+\widetilde{\phi}(O_1)}\\
&\leq 3\psi^*+\phi^*+\max\{1,1/\alpha\}\lrA{\rho\cdot\psi^*+\alpha\cdot\rho\cdot\phi^*}\\
&\leq 3\psi^*+\phi^*+\lrA{\rho\cdot\psi^*+\frac{\rho+2}{\rho}\cdot\rho\cdot\phi^*}=(\rho+3)\cdot\OPT,
\end{align*}
where the last inequality follows from $1\leq\alpha\leq2.07<\frac{\rho+2}{\rho}$ and the last equality follows from $\OPT=\psi^*+\phi^*$. Hence, for any $1\leq\alpha\leq2.07$, Path-APP has an approximation ratio of $\rho+3=4.861$.

Last, we show the running time of our implementations.

The running time of our implementations is dominated by computing minimum cost matchings to obtain tours and using the greedy algorithm of UFL to open depots. 
Given a sub-tree with $s$ vertices, it takes $O(s^3)$ time to find a minimum cost matching on the odd-degree vertices~\cite{christofides1976worst,serdyukov1978some}. The total number of vertices in the sub-trees in our algorithms is bounded by $O(n)$; hence, the running time of finding minimum cost matchings is $O(n^3)$. 
The greedy algorithm of UFL can be implemented in $O(nm\log nm)$ time~\cite{HarksKM13}.
Therefore, our implementations take $O(n^3+nm\log nm)$ time.
\end{proof}

Our algorithms are implemented in C++ on a desktop computer with an AMD Ryzen 5 PRO 4650G with Radeon Graphics (3.70 GHz, 32.0 GB RAM) using Windows Subsystem for Linux (WSL).
The detailed information of our algorithms, and the 45 tested instances can be found in \url{http://github.com/zhmjsnl/CLR}.


\begin{table}[H]
\centering
\resizebox{0.9\textwidth}{!}{
\begin{tabular}{|c|c|c|c|c|c|c|c|c|}
\hline
\multirow{2}*{\boldmath $\alpha$} & \multicolumn{2}{c|}{\textbf{Tree-APP}} & \multicolumn{2}{c|}{\textbf{Tree-LKH}} & \multicolumn{2}{c|}{\textbf{Path-APP}} & \multicolumn{2}{c|}{\textbf{Path-LKH}} \\
\cline{2-9}
& \textbf{Gap} & \textbf{Time} & \textbf{Gap} & \textbf{Time} & \textbf{Gap} & \textbf{Time} & \textbf{Gap} & \textbf{Time}\\
\hline
0.1 & 0.1996 & 0.080 & 0.1780 & 0.313 & 0.2621 & 0.082 & 0.2009 & 1.321\\
0.2 & 0.1503 & 0.078 & 0.1295 & 0.318 & 0.2098 & 0.081 & 0.1455 & 1.180\\
0.3 & 0.1265 & 0.077 & 0.1052 & 0.348 & 0.1883 & 0.080 & 0.1253 & 1.194\\
0.4 & $\boldsymbol{0.1164}$ & 0.077 & $\boldsymbol{0.0944}$ & 0.318 & 0.1755 & 0.080 & 0.1166 & 1.291\\
0.5 & 0.1211 & 0.077 & 0.0984 & 0.356 & 0.1755 & 0.081 & 0.1134 & 1.303\\
0.6 & 0.1202 & 0.077 & 0.0980 & 0.327 & 0.1677 & 0.083 & 0.1084 & 1.361\\
0.7 & 0.1219 & 0.076 & 0.0985 & 0.345 & $\boldsymbol{0.1676}$ & 0.082 & $\boldsymbol{0.1077}$ & 1.226\\
0.8 & 0.1270 & 0.077 & 0.1036 & 0.327 & 0.1733 & 0.082 & 0.1099 & 1.242\\
0.9 & 0.1330 & 0.076 & 0.1103 & 0.343 & 0.1834 & 0.082 & 0.1127 & 1.546\\
1.0 & 0.1363 & 0.076 & 0.1145 & 0.333 & 0.1874 & 0.082 & 0.1125 & 1.549\\
1.1 & 0.1408 & 0.077 & 0.1191 & 0.326 & 0.1872 & 0.083 & 0.1164 & 1.578\\
1.2 & 0.1420 & 0.077 & 0.1206 & 0.345 & 0.1888 & 0.083 & 0.1183 & 1.574\\
1.3 & 0.1480 & 0.077 & 0.1271 & 0.325 & 0.1932 & 0.083 & 0.1245 & 1.381\\
1.4 & 0.1566 & 0.077 & 0.1359 & 0.349 & 0.1998 & 0.083 & 0.1322 & 1.432\\
1.5 & 0.1589 & 0.076 & 0.1375 & 0.317 & 0.2024 & 0.083 & 0.1345 & 1.434\\
\hline 
\end{tabular}}
\caption{The average gaps between our results and the pbks for Tree-Alg and Path-Alg under different values of $\alpha$.
}\label{experiment}
\end{table}

\subsection{Results}
For each instance, the gap of our result to the pbks is also calculated by $Gap=\frac{Result-pbks}{pbks}$. 
A summary of our results can be found in Table~\ref{experiment}, where we list the average gaps between our results of Tree-APP/LKH (resp., our results of Path-APP/LKH) and the {pbks}, and the average running times under different settings of $\alpha$. Note that the previous algorithm in~\cite{HarksKM13} achieves an average gap of $18.8\%$, which reduces to $10.0\%$ when further utilizing LKH, as shown in Table~\ref{insinfo}.

Under $\alpha=0.1$, our algorithms yield larger average gaps. The reason may be that under $\alpha=0.1$ the greedy algorithm of UFL opens almost all depots for each instance, making it too expensive. For other values of $\alpha$, we can obtain better results. Notably, as we increase the value of $\alpha$, the gaps exhibit a significant decrease, potentially because the greedy algorithm opens fewer depots. Especially, when setting $\alpha=0.4$ and $\alpha=0.7$, the gaps become $11.64\%$ and $16.76\%$ for Tree-APP and Path-APP, respectively, both outperforming the previous $18.8\%$. Moreover, when setting $\alpha=0.4$ and $\alpha=0.7$, the gaps become $9.44\%$ and $10.77\%$ for Tree-LKH and Path-LKH, respectively, with Tree-LKH surpassing the previous $10.0\%$. However, when further increasing $\alpha$, the gaps begin to increase, likely because the number of opened depots becomes too small.

For the running time, we can see that our algorithms run very fast. 
Under different values of $\alpha$, the running time of each instance on average is at most 0.083s for both Tree-APP and Path-APP, 0.356s for Tree-LKH, and 1.578s for Path-LKH.
The running time of Path-LKH is slow, mainly because it uses the LKH algorithm to find a Hamiltonian cycle on all customers. 
Note that the running time of the previous algorithm~\cite{HarksKM13}, even when using the LKH algorithm, is about 0.01 seconds, which is slightly faster than ours.

We conclude that when setting $\alpha$ suitably, our algorithms can achieve better results than the previous approximation algorithm, and the quality of solutions is also much closer to optimality than the provable approximation ratio.
Although Path-Alg has a better theoretical approximation ratio, the implementations of Tree-Alg perform slightly better on the tested instances in practice. The reason may be that in the worst-case, Tree-Alg delivers only about $k/2$ of demand on each tour but it delivers close to $k$ of demand on each tour in practical. 

\subsubsection{The details}
We summarize the best found results of our algorithms for each instance under $\alpha\in\{0.1,...,1.5\}$ in Table~\ref{experiment+}, where the column `Result' lists the best results of our algorithms under $\alpha\in\{0.1,...,1.5\}$, the column `Gap' shows the gaps between the best results and the corresponding pbks, the column `$\alpha$' presents the minimum values of $\alpha$ at which the algorithms attained the best results, and the last line calculates the average gaps. We can see that if we focus on the best results of our algorithms under $\alpha\in\{0.1,...,1.5\}$, the average gap of Tree-APP achieves $10.55\%$ and of Path-APP achieves $15.20\%$, and the average gap of Tree-LKH achieves $8.38\%$ and of Path-LKH achieves $10.15\%$. The detailed results of Tree-APP/LKH and Path-APP/LKH under each $\alpha\in\{0.1,...,1.5\}$ can be found in the appendix.

\begin{table}[H]
\centering
\resizebox{0.98\textwidth}{!}{
\begin{tabular}{|c|c|c|c|c|c|c|c|c|c|c|c|c|}
\hline
\multirow{2}*{\textbf{Names}} & \multicolumn{3}{c|}{\textbf{Tree-APP}} & \multicolumn{3}{c|}{\textbf{Tree-LKH}} & \multicolumn{3}{c|}{\textbf{Path-APP}} & \multicolumn{3}{c|}{\textbf{Path-LKH}} \\
\cline{2-13}
& \textbf{Result} & \textbf{Gap} & \boldmath $\alpha$ & \textbf{Result} & \textbf{Gap} & \boldmath $\alpha$ & \textbf{Result} & \textbf{Gap} & \boldmath $\alpha$ & \textbf{Result} & \textbf{Gap} & \boldmath $\alpha$ \\
\hline
Chr69-100x10 & 964.04 & 0.1438 & 0.6 & 919.78 & 0.0912 & 0.6 & 1071.73 & 0.2715 & 0.6 & 937.31 & 0.1120 & 0.6\\
Chr69-50x5 & 653.14 & 0.1548 & 0.5 & 629.28 & 0.1126 & 0.5 & 727.90 & 0.2870 & 1.0 & 671.93 & 0.1880 & 1.0\\
Chr69-75x10 & 928.63 & 0.0778 & 0.8 & 909.71 & 0.0559 & 0.8 & 1024.80 & 0.1895 & 0.2 & 951.69 & 0.1046 & 0.2\\
Gas67-22x5 & 612.80 & 0.0473 & 0.3 & 597.57 & 0.0213 & 0.3 & 695.57 & 0.1888 & 0.3 & 671.84 & 0.1483 & 0.3\\
Gas67-29x5 & 562.12 & 0.0977 & 0.3 & 547.07 & 0.0683 & 0.3 & 624.96 & 0.2204 & 0.5 & 576.60 & 0.1260 & 0.5\\
Gas67-32x5 & 584.95 & 0.0405 & 0.9 & 580.32 & 0.0323 & 0.9 & 680.68 & 0.2108 & 0.9 & 646.15 & 0.1494 & 0.9\\
Gas67-32x5-2 & 541.35 & 0.0735 & 0.6 & 535.26 & 0.0614 & 0.6 & 525.39 & 0.0418 & 0.6 & 519.24 & 0.0297 & 0.6\\
Gas67-36x5 & 517.98 & 0.1251 & 0.3 & 502.19 & 0.0908 & 0.3 & 558.24 & 0.2125 & 0.5 & 532.43 & 0.1565 & 0.3\\
Min92-27x5 & 3298.37 & 0.0772 & 0.3 & 3257.83 & 0.0640 & 0.3 & 3328.57 & 0.0871 & 0.3 & 3273.46 & 0.0691 & 0.3\\
P111112 & 1627.45 & 0.1084 & 0.7 & 1586.88 & 0.0807 & 0.7 & 1689.48 & 0.1506 & 0.7 & 1592.75 & 0.0847 & 0.7\\
P111122 & 1620.17 & 0.1180 & 0.9 & 1590.82 & 0.0978 & 0.9 & 1706.30 & 0.1775 & 0.6 & 1666.16 & 0.1498 & 0.6\\
P111212 & 1499.94 & 0.0741 & 0.5 & 1457.91 & 0.0440 & 0.5 & 1682.31 & 0.2047 & 0.6 & 1556.77 & 0.1148 & 0.5\\
P111222 & 1657.59 & 0.1573 & 0.4 & 1601.88 & 0.1184 & 0.4 & 1766.30 & 0.2332 & 0.4 & 1648.72 & 0.1512 & 0.4\\
P112112 & 1358.22 & 0.1634 & 0.4 & 1319.64 & 0.1303 & 0.4 & 1347.73 & 0.1544 & 0.4 & 1308.09 & 0.1204 & 0.4\\
P112122 & 1238.88 & 0.1235 & 0.4 & 1223.69 & 0.1098 & 0.4 & 1294.96 & 0.1744 & 0.4 & 1263.35 & 0.1457 & 0.4\\
P112212 & 853.06 & 0.0745 & 0.1 & 845.21 & 0.0646 & 0.1 & 869.97 & 0.0958 & 0.6 & 836.18 & 0.0532 & 0.6\\
P112222 & 792.83 & 0.0886 & 0.1 & 785.98 & 0.0792 & 0.1 & 803.98 & 0.1039 & 0.3 & 795.04 & 0.0917 & 0.3\\
P113112 & 1420.15 & 0.1467 & 0.2 & 1380.00 & 0.1143 & 0.2 & 1342.14 & 0.0837 & 0.2 & 1291.34 & 0.0427 & 0.2\\
P113122 & 1353.86 & 0.0863 & 0.6 & 1327.25 & 0.0650 & 0.6 & 1461.53 & 0.1727 & 0.6 & 1407.58 & 0.1294 & 0.6\\
P113212 & 927.76 & 0.0282 & 0.1 & 921.73 & 0.0215 & 0.1 & 979.66 & 0.0857 & 0.1 & 945.92 & 0.0483 & 0.1\\
P113222 & 1156.01 & 0.1319 & 0.3 & 1145.93 & 0.1221 & 0.3 & 1163.35 & 0.1391 & 0.3 & 1134.93 & 0.1113 & 0.3\\
P121112 & 2488.88 & 0.0908 & 0.5 & 2443.27 & 0.0708 & 0.5 & 2653.99 & 0.1632 & 0.6 & 2501.80 & 0.0965 & 0.6\\
P121122 & 2429.40 & 0.1116 & 0.6 & 2390.10 & 0.0936 & 0.6 & 2581.16 & 0.1811 & 0.4 & 2442.89 & 0.1178 & 0.4\\
P121212 & 2459.18 & 0.1004 & 0.5 & 2381.71 & 0.0658 & 0.5 & 2609.33 & 0.1676 & 0.5 & 2434.48 & 0.0894 & 0.5\\
P121222 & 2401.74 & 0.0630 & 0.6 & 2348.73 & 0.0395 & 0.6 & 2641.45 & 0.1691 & 0.5 & 2487.35 & 0.1009 & 0.5\\
P122112 & 2408.92 & 0.1461 & 0.4 & 2362.30 & 0.1239 & 0.4 & 2350.18 & 0.1182 & 0.4 & 2243.22 & 0.0673 & 0.5\\
P122122 & 1914.30 & 0.1198 & 0.7 & 1866.88 & 0.0921 & 0.7 & 2013.06 & 0.1776 & 0.4 & 1936.71 & 0.1329 & 0.4\\
P122212 & 1587.52 & 0.0818 & 1.1 & 1566.68 & 0.0676 & 1.1 & 1597.22 & 0.0884 & 1.1 & 1529.33 & 0.0421 & 1.1\\
P122222 & 1181.65 & 0.0893 & 0.1 & 1168.06 & 0.0768 & 0.1 & 1179.49 & 0.0873 & 0.1 & 1140.18 & 0.0511 & 0.1\\
P123112 & 2143.79 & 0.0865 & 0.3 & 2099.42 & 0.0640 & 0.3 & 2108.73 & 0.0687 & 0.3 & 2032.33 & 0.0300 & 0.3\\
P123122 & 2131.50 & 0.0891 & 0.4 & 2089.97 & 0.0679 & 0.4 & 2240.06 & 0.1445 & 0.4 & 2133.93 & 0.0903 & 0.4\\
P123212 & 1999.03 & 0.1288 & 0.1 & 1982.45 & 0.1194 & 0.1 & 1964.92 & 0.1095 & 0.1 & 1916.65 & 0.0823 & 0.8\\
P123222 & 1568.78 & 0.1257 & 0.1 & 1548.63 & 0.1113 & 0.1 & 1535.53 & 0.1019 & 0.1 & 1497.27 & 0.0744 & 0.1\\
P131112 & 2172.39 & 0.1638 & 0.4 & 2108.91 & 0.1298 & 0.4 & 2339.39 & 0.2532 & 0.9 & 2206.60 & 0.1821 & 0.4\\
P131122 & 1996.73 & 0.0841 & 0.6 & 1966.13 & 0.0675 & 0.6 & 2207.36 & 0.1985 & 0.8 & 2115.16 & 0.1484 & 0.8\\
P131212 & 2224.63 & 0.1228 & 0.6 & 2188.29 & 0.1045 & 0.6 & 2290.29 & 0.1559 & 0.6 & 2164.80 & 0.0926 & 0.6\\
P131222 & 1983.94 & 0.0966 & 0.4 & 1917.62 & 0.0599 & 0.4 & 2169.20 & 0.1990 & 0.4 & 2074.71 & 0.1468 & 1.0\\
P132112 & 1641.41 & 0.1334 & 0.7 & 1604.34 & 0.1078 & 0.7 & 1677.30 & 0.1582 & 0.6 & 1620.25 & 0.1188 & 0.3\\
P132122 & 1690.47 & 0.1705 & 0.4 & 1667.31 & 0.1545 & 0.4 & 1608.74 & 0.1139 & 0.4 & 1569.25 & 0.0866 & 0.4\\
P132212 & 1337.87 & 0.1087 & 0.1 & 1329.39 & 0.1017 & 0.1 & 1311.92 & 0.0872 & 0.1 & 1278.44 & 0.0595 & 0.1\\
P132222 & 1008.25 & 0.0819 & 0.1 & 991.36 & 0.0638 & 0.1 & 1025.85 & 0.1008 & 0.1 & 997.53 & 0.0704 & 0.1\\
P133112 & 1919.08 & 0.1290 & 0.6 & 1883.91 & 0.1083 & 0.6 & 1947.17 & 0.1455 & 0.2 & 1873.57 & 0.1022 & 0.2\\
P133122 & 1526.59 & 0.0891 & 0.2 & 1488.73 & 0.0620 & 0.2 & 1599.15 & 0.1408 & 0.7 & 1548.11 & 0.1044 & 0.7\\
P133212 & 1364.01 & 0.1372 & 0.1 & 1346.37 & 0.1225 & 0.1 & 1335.16 & 0.1131 & 0.1 & 1299.64 & 0.0835 & 0.1\\
P133222 & 1219.66 & 0.0580 & 1.5 & 1209.53 & 0.0492 & 1.5 & 1277.35 & 0.1080 & 1.5 & 1233.54 & 0.0700 & 1.5\\
\hline
\textbf{Average Gap} && 0.1055 &&& 0.0838 &&& 0.1520 &&& 0.1015&\\
\hline
\end{tabular}}
\caption{A summary of the best found results of Tree-APP/LKH and Path-APP/LKH, the gaps to pbks, and the value of $\alpha$ for each instance at which the algorithm attained the best result, where the last line calculates the average gaps.
}\label{experiment+}
\end{table}

\section{Conclusion}\label{sec7}
In this paper, we propose two improved approximation algorithms for CFL: one is a 4.169-approximation algorithm for both the unsplittable case and the splittable case, and another is a 4.091-approximation algorithm for the splittable case. Our techniques have the potential to solve more variants of CLR, say the prize-collecting and grouping CLR  mentioned in~\cite{HarksKM13}. 
 
It is known that an $\alpha$-approximation algorithm for CLR also implies an $\alpha$-approximation ratio for MCVRP. Recently, the 4-approximation barrier of MCVRP was broken~\cite{DBLP:conf/cocoon/ZhaoX23}. For further study, it should be interesting to design a better-than-4-approximation algorithm for CLR.


%
%
%

\section*{Acknowledgments}
The work is supported by the National Natural Science Foundation of China, under grants 62372095. A preliminary version of this paper~\cite{ijcai24} was presented at the 33rd International Joint Conference on Artificial Intelligence (IJCAI 2024). 

\section*{Declaration of competing interest} 
The authors declare that they have no known competing financial interests or personal relationships that could have appeared to influence the work reported in this paper.

\bibliographystyle{elsarticle-num} 
\bibliography{main}
\appendix
\clearpage
\section{The Detailed Experimental Results}
The detailed results of Tree-APP and Path-APP can be found in Tables~\ref{res-1}-\ref{res-3}, where we list the gaps between the results and pbks, and the running times for Tree-APP and Path-APP on benchmark instances under $\alpha\in\{0.1,0.2,...,1.5\}$. 
We can see that the average gap of Tree-APP achieves the minimum $0.1164$ when $\alpha=0.4$ and the average gap of Path-APP achieves the minimum $0.1676$ when $\alpha=0.7$.

The detailed results of Tree-LKH and Path-LKH can be found in Tables~\ref{res-1+}-\ref{res-3+}, where we list the gaps between the results and pbks, and the running times for Tree-LKH and Path-LKH on benchmark instances under $\alpha\in\{0.1,0.2,...,1.5\}$. 
We can see that the average gap of Tree-LKH achieves the minimum $0.0944$ when $\alpha=0.4$ and the average gap of Path-LKH achieves the minimum $0.1077$ when $\alpha=0.7$.

\begin{table}[ht]
\centering
\resizebox{1.2\textwidth}{!}{
}
\caption{Gaps to pbks and running times for Tree-LKH and Path-LKH on benchmark instances under $\alpha\in\{1.1,1.2,1.3,1.4,1.5\}$, with the last line representing average gaps and average running times.}
\label{res-3+}
\end{table}
\end{document}